\documentclass[11pt]{article}

\usepackage[T1]{fontenc}
\usepackage[margin=1in]{geometry}
\usepackage{amsmath,amssymb,amsthm,mathtools}
\usepackage{aliascnt}
\usepackage{newpxtext,newpxmath}
\usepackage{microtype}
\usepackage{enumitem}
\usepackage{xcolor}
\usepackage[numbers]{natbib}
\usepackage{tikz}
\usetikzlibrary{arrows.meta,patterns}
\usepackage{booktabs}
\usepackage[colorlinks=true,
            linkcolor=blue!55!black,
            citecolor=blue!55!black,
            urlcolor=blue!55!black]{hyperref}
\usepackage[nameinlink,capitalise,noabbrev]{cleveref}

\allowdisplaybreaks
\setlist[itemize]{leftmargin=2em,itemsep=2pt,topsep=4pt}
\setlist[enumerate]{leftmargin=2.4em,itemsep=2pt,topsep=4pt}

\newtheorem{theorem}{Theorem}[section]
\newaliascnt{lemma}{theorem}
\newtheorem{lemma}[lemma]{Lemma}
\aliascntresetthe{lemma}
\newaliascnt{proposition}{theorem}
\newtheorem{proposition}[proposition]{Proposition}
\aliascntresetthe{proposition}
\newaliascnt{corollary}{theorem}
\newtheorem{corollary}[corollary]{Corollary}
\aliascntresetthe{corollary}
\newaliascnt{claim}{theorem}

\aliascntresetthe{claim}
\theoremstyle{definition}
\newaliascnt{definition}{theorem}

\aliascntresetthe{definition}
\newaliascnt{example}{theorem}

\aliascntresetthe{example}
\theoremstyle{remark}
\newaliascnt{remark}{theorem}
\newtheorem{remark}[remark]{Remark}
\aliascntresetthe{remark}

\AddToHook{env/theorem/begin}{\crefalias{section}{theorem}}
\AddToHook{env/lemma/begin}{\crefalias{section}{lemma}}
\AddToHook{env/proposition/begin}{\crefalias{section}{proposition}}
\AddToHook{env/corollary/begin}{\crefalias{section}{corollary}}
\AddToHook{env/claim/begin}{\crefalias{section}{claim}}
\AddToHook{env/definition/begin}{\crefalias{section}{definition}}
\AddToHook{env/example/begin}{\crefalias{section}{example}}
\AddToHook{env/remark/begin}{\crefalias{section}{remark}}

\crefname{theorem}{Theorem}{Theorems}
\Crefname{theorem}{Theorem}{Theorems}
\crefname{lemma}{Lemma}{Lemmas}
\Crefname{lemma}{Lemma}{Lemmas}
\crefname{proposition}{Proposition}{Propositions}
\Crefname{proposition}{Proposition}{Propositions}
\crefname{corollary}{Corollary}{Corollaries}
\Crefname{corollary}{Corollary}{Corollaries}
\crefname{claim}{Claim}{Claims}
\Crefname{claim}{Claim}{Claims}
\crefname{definition}{Definition}{Definitions}
\Crefname{definition}{Definition}{Definitions}
\crefname{example}{Example}{Examples}
\Crefname{example}{Example}{Examples}
\crefname{remark}{Remark}{Remarks}
\Crefname{remark}{Remark}{Remarks}

\newcommand{\E}{\mathbb{E}}
\newcommand{\Prb}{\mathbb{P}}

\newcommand{\1}{\mathbf 1}
\newcommand{\cF}{\mathcal{F}}
\newcommand{\cD}{\mathcal{D}}
\newcommand{\FB}{\operatorname{FB}}

\newcommand{\GFT}{\operatorname{GFT}}
\newcommand{\GSOM}{\operatorname{GSOM}}
\newcommand{\GBOM}{\operatorname{GBOM}}

\newcommand{\OPTS}{\operatorname{OPT}_{S}}
\newcommand{\OPTB}{\operatorname{OPT}_{B}}
\newcommand{\dd}{\,\mathrm d}
\newcommand{\pos}[1]{\left(#1\right)_{+}}
\newcommand{\vct}[1]{\boldsymbol{#1}}

\title{The Power of Simple Mechanisms: A Tight \(e\)-Approximation for Gains from Trade in Matching Markets}
\author{
Xiaohui Bei\thanks{School of Physical and Mathematical Sciences, Nanyang Technological University, Singapore.\\
Email: \href{mailto:xhbei@ntu.edu.sg}{\nolinkurl{xhbei@ntu.edu.sg}}.}
\and
Wenhao Wu\thanks{School of Physical and Mathematical Sciences, Nanyang Technological University, Singapore.\\
Email: \href{mailto:wenhao008@e.ntu.edu.sg}{\nolinkurl{wenhao008@e.ntu.edu.sg}}.}
\and
Shengwei Zhou\thanks{School of Physical and Mathematical Sciences, Nanyang Technological University, Singapore.\\
Email: \href{mailto:s.arthur.zhou@gmail.com}{\nolinkurl{s.arthur.zhou@gmail.com}}.}
}
\date{}

\begin{document}
\maketitle

\begin{abstract}
We study how well simple mechanisms approximate gains from trade (GFT) in two-sided matching markets with independent buyer values and seller costs, where feasible outcomes form an arbitrary downward-closed family of matchings.
This model includes bilateral trade and double auctions as special cases.
We focus on the Generalized Random-Offerer (GRO) mechanism, which is an equal mixture of the Generalized Sellers-Offering Mechanism (GSOM) and the Generalized Buyers-Offering Mechanism (GBOM).
We determine GRO's exact worst-case approximation ratio with respect to first-best GFT, showing that it is $e$.
This improves the previous $3.15$ approximation guarantee~\cite{BabaioffRubinsteinTanWang2026} for the same mechanism.
In bilateral trade, the result implies a $1/e$ guarantee for the random-offerer mechanism, improving the previous $1/\pi$ bound~\cite{Jo2026}.
Our analysis uses a two-dimensional inequality in quantile space to compare first-best GFT with optimal one-sided auction profits.
For each potential trade, we identify the region of buyer values and seller costs for which the first-best allocation selects that trade, and then randomly shrink this region in quantile space, yielding posted-price rules whose expected profits can be evaluated exactly.
We establish tightness of the $e$ approximation by constructing markets with regular type distributions, pairwise disjoint trading edges, and a single knapsack constraint.
On these instances, GRO's expected GFT approaches a $1/e$ fraction of first-best GFT, therefore ruling out any better guarantee even under these restrictions.

\end{abstract}

\section{Introduction}
Two-sided markets play a central role in modern economics, providing a framework for understanding how trade between buyers and sellers allocates resources and creates economic value.
Their scope extends from the simplest exchange of a single item between one buyer and one seller, known as \emph{bilateral trade}, to more general settings involving multiple items and many participants on both sides of the market.
In these larger markets, trading opportunities compete, and the choice of who trades with whom determines how much of the available surplus can be realized.
The goal is to maximize the total surplus generated by a feasible set of trades, measured as the sum of buyers' values minus sellers' costs and called the \emph{gains from trade} (GFT).

When all values and costs are known, the expected maximum GFT achievable is known as the \emph{first-best} benchmark, which captures the market's full potential for surplus generation in the absence of informational constraints.
Achieving this benchmark becomes difficult when values and costs are private.
A mechanism must provide incentives for truthful reporting, ensure that agents are not worse off by participating, and avoid requiring an outside subsidy.
These requirements are captured by Bayesian incentive compatibility (BIC), individual rationality (IR), and weak budget balance (WBB).
Under standard conditions, the classic Myerson--Satterthwaite theorem shows that first-best efficiency is incompatible with BIC, interim IR, and ex-ante WBB even in the simplest setting of \emph{bilateral trade}, where there is only one buyer and one seller trading a single item~\cite{MyersonSatterthwaite1983}.
This impossibility motivates the study of simple mechanisms that satisfy these requirements while guaranteeing a fraction of first-best GFT.

A long line of work has studied first-best approximation in bilateral trade.
Deng et al.~\cite{DengMaoSivanWang2022} established the first constant-factor approximation to first-best GFT for arbitrary independent distributions, proving a $1/8.23$ guarantee for the \emph{random-offerer mechanism}.
This mechanism selects the buyer or the seller with equal probability and lets the selected agent make a utility-maximizing take-it-or-leave-it offer to the other agent.
It is BIC, ex-post IR, and ex-post strongly budget balanced~\cite{BrustleCaiWuZhao2017}.
Subsequent progress came from sharper analyses of this same mechanism: Fei~\cite{Fei2022} improved the guarantee to $1/3.15$, Hartline and Wang~\cite{HartlineWang2025} recovered this bound through a geometric analysis, and Jo~\cite{Jo2026} improved it further to $1/\pi$.

Comparing to bilateral trade, \emph{matching markets} is a more general setting, where multiple buyers and sellers interact and only certain trades are feasible. 
Allowable trades are represented by the edges of a bipartite graph, and a feasible outcome is a matching.
Extending these bilateral guarantees to matching markets requires selecting compatible trades among multiple buyers and sellers and is significantly more challenging.
Br\"ustle et al.~\cite{BrustleCaiWuZhao2017} introduced the Generalized Sellers-Offering Mechanism (GSOM) and the Generalized Buyers-Offering Mechanism (GBOM).
We refer to their equal mixture as the \emph{Generalized Random-Offerer (GRO)} mechanism
\footnote{Babaioff et al.~\cite{BabaioffCaiGonczarowskiZhao2018} refer to the same mechanism as the Random Virtual-Welfare Maximizing (RVWM) mechanism.}.
GRO is dominant-strategy incentive compatible (DSIC), ex-post IR, and ex-ante WBB.
In bilateral trade, its allocation rule coincides with that of the random-offerer mechanism.
Obtaining a constant-factor approximation to first-best GFT in matching markets had remained a major open question in two-sided markets~\cite{CaiGoldnerMaZhao2021}.
Only very recently, Babaioff et al.~\cite{BabaioffRubinsteinTanWang2026} established the first constant-factor approximation to first-best GFT in general matching markets by proving a $1/3.15 \approx 0.317$ guarantee for this mechanism, even under arbitrary downward-closed feasibility constraints.

These results establish constant-factor lower bound guarantees.
In the other direction, upper bounds, which indicate the limitation of the mechanism's performance, also arise already in bilateral trade.
Cai et al.~\cite{CaiGoldnerMaZhao2021} and, independently, Babaioff, Dobzinski, and Kupfer~\cite{BabaioffDobzinskiKupfer2021} ruled out a $1/2$ guarantee for the random-offerer mechanism, with the latter construction giving a first-best fraction below $0.495$.
Subsequent counterexamples lowered the upper bound on its worst-case guarantee to $0.481951$~\cite{CaiGuptaLiMehta2026} and then $0.460243$~\cite{Jo2026}.
Since bilateral trade is a special case of matching markets, these examples also give upper bounds on the worst-case guarantee of GRO in general matching markets.
However, none of these upper bounds is tight, and a gap remains between the best known lower and upper bounds for GRO in matching markets.

We close this gap by completely characterizing GRO's worst-case first-best approximation guarantee over downward-closed matching markets: the exact fraction is $1/e$.
That is, GRO always obtains at least a $1/e$ fraction of first-best GFT, and this fraction cannot be improved over the full class of downward-closed matching markets.
As a consequence, we also improve the first-best guarantee of the random-offerer mechanism in bilateral trade from $1/\pi$ to $1/e$.

\subsection{Our Results}\label{subsec:results}

We consider matching markets with independent buyer values and seller costs and an arbitrary downward-closed family of feasible matchings.
Let $\mathcal M$ denote GRO, i.e., the mechanism that runs GSOM or GBOM with equal probability.
It is DSIC, ex-post IR, and ex-ante WBB~\cite{BrustleCaiWuZhao2017}.
Our main result determines its exact worst-case approximation guarantee.

\begin{theorem}[Main theorem, informal]\label{thm:main-informal}
In matching markets with independent types and downward-closed feasibility, $\mathcal M$ obtains at least a $1/e$ fraction of first-best GFT.
Moreover, this guarantee is tight for $\mathcal M$ over this class of markets.
\end{theorem}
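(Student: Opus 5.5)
The lower bound will follow from a pointwise-in-quantiles comparison, and the upper bound from an explicit family of markets.

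\textbf{Reduction to one-sided profits.} I would use the standard facts about GSOM and GBOM (from Br\"ustle et al.). GSOM's expected GFT is at least the optimal expected profit $\OPTB$ of a seller-side ``monopolist'' that posts feasible take-it-or-leave-it prices to buyers, with seller costs realized. Symmetrically, GBOM's GFT is at least the optimal buyer-side monopsonist profit $\OPTS$. Hence $\GFT(\mathcal M)\ge \tfrac12(\OPTS+\OPTB)$, and it suffices to show
\[
\FB\le \tfrac e2\,(\OPTS+\OPTB).
\]
Since $\OPTB$ and $\OPTS$ are maxima over feasible posted-price rules, it is enough to exhibit, for each realization, randomized posted-price rules whose combined expected profit dominates $\tfrac{2}{e}\FB$.

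\textbf{Decomposing the first-best and shrinking regions.} I would write $\FB=\sum_{e=(i,j)}\E[(v_i-c_j)\1\{e\in \text{FB matching}\}]$. For each edge $e$, fix the other agents' types; the set of $(v_i,c_j)$ for which first-best selects $e$ is an up-left monotone region $R_e$, with $v_i$ large and $c_j$ small, bounded by a monotone threshold curve. I would pass to quantile coordinates $q=1-F_i(v_i)$ and $r=G_j(c_j)$, where $R_e$ becomes a down-closed region under a decreasing staircase. Then I would draw a random scaling $t$ with density proportional to $1/t$ on $[1/e,1]$, or a similar $\log$-type law, and shrink $R_e$ to $tR_e$ in quantile space. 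The allocation ``trade $e$ iff $(q,r)\in tR_e$'' stays feasible because it is a subset of the first-best selection, so it is downward closed. It can also be implemented by a seller-offer price (the threshold on $v_i$ given $c_j$) or by a buyer-offer price. The seller's revenue from this price rule, minus cost, equals an integral over quantile space of $v_i$ evaluated at the boundary of the shrunk region. The buyer side is analogous.

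\textbf{The core inequality, and the main obstacle.} The heart of the proof is a two-dimensional inequality. For every down-closed region $R\subseteq[0,1]^2$ and all monotone $v(q)$ and $c(r)$,
\[
\iint_R (v(q)-c(r))\,dq\,dr\le \tfrac e2\,\E_t\Big[\text{profit}_S(tR)+\text{profit}_B(tR)\Big].
\]
I would reduce this by linearity to extremal cases: $v$ and $c$ step functions, i.e.\ indicators $v=\1\{q\le a\}$ and $c=-\1\{r\le b\}$-type pieces, and $R$ a rectangle or staircase. The constant $e$ should come from optimizing the shrinking distribution against the hardest case, where the $1/t$ density yields $\int_{1/e}^1 dt/t=1$. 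I expect this step to be the main obstacle, because the staircase shapes do not obviously reduce to rectangles. I would first try to verify the inequality for a single rectangle, and then argue that the profit is superadditive over the horizontal slices of a staircase.

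\textbf{Tightness.} I would use disjoint edges $(i,i)$, each with an equal-revenue-like regular buyer value distribution truncated at $H$ and a near-point seller cost, together with a knapsack constraint on the number of trades weighted by edge sizes. The construction should force GSOM and GBOM each to capture only about $1/e$ of each edge's first-best contribution. Letting the number of edges and $H$ tend to infinity, with a computation of both mechanisms' GFT via virtual values, gives the ratio $1/e$.
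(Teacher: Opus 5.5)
\textbf{The core local inequality is the gap.} Your reduction to $\FB\le\frac e2(\OPTS+\OPTB)$ and the edge-by-edge decomposition into quantile-space down-sets match the paper. (You swapped the labels $\OPTS$ and $\OPTB$, which is harmless.) But the local inequality is not merely left open: the scheme you propose for it fails.

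Take one edge with $c\equiv0$ and buyer quantile function $v(q)=M\1\{q\le\epsilon\}+\delta$, where $\epsilon<1/e$. Then $R=(0,1)^2$ and the left side is $M\epsilon+\delta$. On $tR=(0,t)^2$:
\begin{itemize}
\item the seller-offer rule posts $v(t)=\delta$ on every row $r<t$, earning $t^2\delta$;
\item the buyer-offer rule procures the bottom $t$ sellers at price $0$, earning $t(M\epsilon+t\delta)$.
\end{itemize}
Your right side is therefore $\frac e2\bigl(\E[t]M\epsilon+2\E[t^2]\delta\bigr)$. For density $1/t$ on $[1/e,1]$ we have $\E[t]=1-1/e$, so as $\delta\to0$ the right side tends to $\frac{e-1}{2}M\epsilon<M\epsilon$.

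Changing the law does not help. Letting $\epsilon,\delta\to0$, this example forces $\E[t]\ge2/e$. Now take the paper's capped-exponential pair, whose first-best region is $\{\rho\sigma\le e^{-L}\}$ and whose optimal threshold is essentially $1/e$ of each section. There your two rules earn about $4t^2\log(1/t)\,Le^{-L}$, against first-best GFT $Le^{-L}$. As $L\to\infty$ this forces $t^2=1/e$ almost surely. Since $e^{-1/2}<2/e$, no fixed law for a single isotropic scale, with the two rules weighted equally, can give the constant $e/2$.

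\textbf{The missing idea} is an anisotropic, complementary contraction with correlated rule weights:
\begin{itemize}
\item draw $h\sim\operatorname{Unif}(0,1)$ and contract by $(\rho,\sigma)\mapsto(e^{-h}\rho,e^{-(1-h)}\sigma)$, so the scale product is always $e^{-1}$;
\item use the horizontal rule with probability $1-h$ and the vertical rule with probability $h$.
\end{itemize}
Because the thresholds depend only on the region, the buyer-value terms split along horizontal sections and the seller-cost terms along vertical sections. Differentiating $h\mapsto(1-h)e^{h-1}\int_0^{e^{-h}t}Q(u)\dd u$ and integrating over $h\in[0,1]$ shows each part is exactly $1/e$ of its first-best counterpart. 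Hence the weighted profit equals $\frac1e\int_R(Q_B-Q_S)$ for every down-set, with no reduction to rectangles, staircases, or superadditivity. Then $\E[h]=\E[1-h]=\frac12$ gives $e/2$ against the section-wise optimized benchmarks.

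\textbf{Realization step.} Accepted trades lying inside the first-best matching is not enough. To merge the edge-level prices into one truthful buyer auction, each buyer must face a single threshold independent of her own report. This is the stable-partner property: with all other reports fixed, a buyer has the same partner whenever first best matches her.

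\textbf{Tightness sketch.} It fails as stated. With a (near-)point-mass seller cost, the virtual cost equals (approximately) the cost. GBOM then maximizes (approximately) true GFT under any downward-closed constraint, so GRO gets about half of first best. Both sides must carry information rents; the paper uses a capped exponential buyer and its mirror-image seller, whose one-sided profits are each about $1/e$ of first best.

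You also need a device converting profit-tightness into GFT-tightness, since each component's GFT exceeds its profit by the strategic side's rents. The paper adds one deterministic edge with GFT $Nd_L$, strictly between $N$ times the expected virtual surplus and $N$ times the expected GFT of a random edge. A knapsack allows either any subset of the $N$ random edges or the deterministic edge alone. By the law of large numbers, first best picks the random edges while both GSOM and GBOM pick the deterministic edge.
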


This improves the $1/3.15$ guarantee of Babaioff et al.~\cite{BabaioffRubinsteinTanWang2026} and determines the exact worst-case guarantee of GRO.
We establish tightness by constructing instances with regular type distributions supported on $[0,1]$.
All trading edges have pairwise disjoint endpoints, and feasibility is imposed by a single knapsack constraint.
Along this family, the ratio of GRO's expected GFT to first-best GFT approaches $1/e$.
The formal statements appear in \cref{thm:main,thm:tightness}.

Our guarantee also improves the approximation achievable in bilateral trade.
With one buyer and one seller, GSOM and GBOM have the same allocation rules as the seller-offering and buyer-offering mechanisms, respectively.
Thus GRO reduces to the random-offerer mechanism in bilateral trade.

\begin{corollary}[Random-offerer mechanism, informal]\label{cor:random-offerer}
In bilateral trade with independent buyer values and seller costs, the random-offerer mechanism obtains at least a $1/e$ fraction of first-best GFT.
\end{corollary}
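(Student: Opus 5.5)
The plan is to derive \cref{cor:random-offerer} as the special case of \cref{thm:main-informal} (formally \cref{thm:main}) with one buyer $b$, one seller $s$, a single edge $(b,s)$, and the feasibility family $\{\emptyset,\{(b,s)\}\}$. This family is trivially downward-closed, and the types $v\sim F_b$, $c\sim F_s$ are independent. So the market satisfies all hypotheses of the main theorem, and $\mathcal M$ earns expected GFT at least $\frac1e\,\E[\pos{v-c}]$, which is exactly $1/e$ of first-best GFT in bilateral trade.

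It remains to identify $\mathcal M$ with the random-offerer mechanism as far as GFT is concerned. GFT depends only on the allocation rule: it equals $\E[(v-c)\1\{\text{trade}\}]$, and payments cancel. So it suffices to check that the allocations coincide, which is the observation stated just before the corollary. Concretely, I would write out GSOM for a single seller and a single edge. The seller's ironed virtual cost (the seller-side virtual value) is compared with the buyer's value, and trade occurs iff the buyer's value exceeds it. This is exactly the event that the buyer accepts the seller's optimal take-it-or-leave-it price $p_s(c)\in\arg\max_p (p-c)(1-F_b(p))$.

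The symmetric computation shows that GBOM's allocation coincides with trade under the buyer's optimal offer $\arg\max_q (v-q)F_s(q)$. Since $\mathcal M$ mixes GSOM and GBOM with probability $1/2$ each, and the random-offerer mechanism mixes the two offer rules with probability $1/2$ each, the expected GFTs are equal. The bound therefore transfers.

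The only delicate point is tie-breaking and ironing: when the optimal offer is not unique, or the distributions are irregular, the two allocation rules could differ on a measure-zero set or through the choice among maximizers. I would handle this by observing two things. First, the main theorem's bound is proved against the GFT of the posted-price rules used in its analysis, and every maximizer of profit yields at least that much. Second, one may fix a tie-breaking rule in the random-offerer mechanism that matches GSOM/GBOM, which is the convention under which the allocation rules agree~\cite{BrustleCaiWuZhao2017}. No other obstacle is expected; the corollary is essentially immediate from the main theorem.
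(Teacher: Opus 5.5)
Your reduction is the paper's: specialize \cref{thm:main} to a single edge with $\cF=\{\varnothing,\{(b,s)\}\}$, note that GFT depends only on the allocation, and use that GSOM and GBOM allocate like the seller-offering and buyer-offering mechanisms. However, the concrete step you spell out has the virtual functions swapped. GSOM maximizes $\overline\phi(v)-c$, so it trades iff the \emph{buyer's} ironed virtual value is at least the seller's cost. In the paper, ``seller-side'' means the seller is the offerer and the buyer is strategic, so the relevant virtual function is the buyer's. This is the seller-offer event, because the seller's problem $\max_x x\bigl(Q_B(x)-c\bigr)=\max_x\bigl(R(x)-cx\bigr)$ is solved at the buyer quantile where the slope of $\overline R$ crosses $c$. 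The event you describe, $v\ge\overline\psi(c)$, is GBOM's rule, and it corresponds to acceptance of the \emph{buyer's} optimal offer. For example, with both types uniform on $[0,1]$, the seller's optimal price is $(1+c)/2$, so seller-offering trades iff $\phi(v)=2v-1\ge c$. By contrast, $v\ge\psi(c)=2c$ is exactly acceptance of the buyer's optimal offer $v/2$. The two events you wrote down are the two offer events, only attributed to the wrong mechanisms, and the mixture weights are equal, so your final conclusion survives. Still, both stated identities are false as written and must be interchanged.

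Your first tie-breaking remark is also misstated. \cref{thm:main} is proved through the profit benchmarks, $\FB\le\frac e2(\OPTS+\OPTB)$, not against the GFT of the auxiliary posted-price rules. Stated correctly, the remark gives a route that avoids matching allocations altogether, unlike the paper's one-line reduction. With a single counterparty, posting a price is an optimal one-sided mechanism. Hence any utility-maximizing offer earns the offerer exactly the one-sided optimum given her type. Its GFT is at least that utility, because the responder's ex-post utility is nonnegative. Therefore every selection among optimal offers yields $\E[\GFT]\ge\frac12(\OPTS+\OPTB)\ge\frac1e\FB$, and no alignment of tie-breaking or ironing is needed.
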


This improves the previous $1/\pi$ guarantee of Jo~\cite{Jo2026}.
Together with its upper bound, our result places the worst-case first-best fraction of the random-offerer mechanism in $[1/e,0.460243)$.

\subsection{Technical overview}\label{subsec:overview}

Our approximation proof follows the one-sided profit approach of Babaioff et al.~\cite{BabaioffRubinsteinTanWang2026}.
For GRO's GSOM component, the seller-side profit benchmark counts buyer payments net of the matched sellers' costs; for its GBOM component, the buyer-side benchmark counts the value of the matched buyers net of seller payments.
Let $\Pi_S(\GSOM)$ and $\Pi_B(\GBOM)$ denote these expected profits, and let $\FB$ denote expected first-best GFT.
We prove
\begin{equation}\label{eq:main-profit}
  \FB
  \le \frac e2\Bigl(\Pi_S(\GSOM)+\Pi_B(\GBOM)\Bigr).
\end{equation}
Since the expected GFT of each component is at least its corresponding one-sided profit, this inequality implies the $1/e$ guarantee for GRO.
The main task is to compare first-best GFT with local posted-price profits and then realize these profits through feasible one-sided auctions.

\paragraph{From matching markets to local trade regions.}
Fix a trading edge and all reports outside its endpoints.
The remaining buyer value and seller cost are independent, but conditioning on the edge being selected by first best generally correlates them.
Consequently, a guarantee for independent bilateral trade does not apply directly to the selected trade.

Following Bei et al.~\cite{BeiLiWuZhou2026SecondBest}, we represent this selection event as a region in quantile space.
Write the two types as $Q_B(\rho)$ and $Q_S(\sigma)$, where $\rho$ and $\sigma$ are independent uniform buyer upper-tail and seller lower-tail ranks.
Smaller ranks correspond to higher buyer values and lower seller costs.
The region $\cD$ on which first best selects the edge is a down-set: whenever it contains a rank pair, it also contains every coordinatewise smaller pair.
Averaging the GFT contribution of this region over the fixed reports and summing over edges recovers first-best GFT.

To bound this contribution, we compare the GFT on $\cD$ with the profits from posted prices whose accepted trades lie in $\cD$.
Let $\ell_B(\sigma)$ and $\ell_S(\rho)$ denote its horizontal and vertical section lengths.
For a seller with rank $\sigma$, a price serving the top $x<\ell_B(\sigma)$ buyer mass yields expected profit $x(Q_B(x)-Q_S(\sigma))$.
Optimizing separately on each horizontal section and averaging gives
\[
  P_S(\cD):=\int_0^1
    \sup_{0\le x<\ell_B(\sigma)}
    x\bigl(Q_B(x)-Q_S(\sigma)\bigr)\dd \sigma,
\]
where $x=0$ denotes no trade and empty sections contribute zero.
Define $P_B(\cD)$ symmetrically using the vertical sections.
Our main analytic bound is
\begin{equation}\label{eq:intro-local}
  \int_\cD\bigl(Q_B(\rho)-Q_S(\sigma)\bigr)\dd \rho\dd \sigma
  \le \frac e2\bigl(P_S(\cD)+P_B(\cD)\bigr).
\end{equation}
The comparison must respect the section-dependent restrictions on admissible prices.

\paragraph{Proving the local inequality.}
Draw $h$ uniformly from $(0,1)$ and consider the complementary contraction
\begin{equation*}
  (\rho,\sigma)\mapsto\bigl(e^{-h}\rho,e^{-(1-h)}\sigma\bigr),
  \qquad h\in(0,1).
\end{equation*}
The horizontal and vertical boundaries of the contracted region specify two posted-price rules.
Because $\cD$ is a down-set, their thresholds lie inside the original sections.
Choose the horizontal rule with probability $1-h$ and the vertical rule with probability $h$.

We evaluate the resulting expected one-sided profit by separating the buyer and seller contributions.
The integral identity in \cref{lem:exponential-scale} shows that the averaged buyer contribution---buyer payments under the horizontal rule and buyer values under the vertical rule---equals $1/e$ of the original buyer-value integral.
The symmetric calculation gives $1/e$ of the original seller-cost integral.
Subtracting these identities shows that the expected local profit is exactly $1/e$ of the original GFT.
Each type of rule is selected with total probability $1/2$, and its profit is at most the corresponding section benchmark, yielding \eqref{eq:intro-local}.
The argument applies directly to integrable quantiles, including distributions with point masses.
Figure~\ref{fig:local-geometry} illustrates the contraction.

\begin{figure}[h]
\centering
\begin{tikzpicture}[x=0.82cm,y=0.82cm,>=Latex,font=\normalsize]
  \begin{scope}
    \path[fill=black!10,draw=black!70,thick]
      (0,0)--(5.0,0)--(5.0,0.7)--(4.2,0.7)--(4.2,1.4)--(3.4,1.4)--(3.4,2.1)--(2.5,2.1)--(2.5,2.8)--(1.6,2.8)--(1.6,3.45)--(0,3.45)--cycle;
    \draw[->,thick] (0,0)--(5.35,0) node[right] {$\rho$};
    \draw[->,thick] (0,0)--(0,3.8) node[above] {$\sigma$};

    \draw[dashed] (0,1.75)--(3.4,1.75);
    \draw[dashed] (2.9,0)--(2.9,2.1);
    \node[left] at (0,1.75) {$\bar \sigma$};
    \node[above] at (3.1,-0.1) {$\bar \rho$};
    \node[font=\scriptsize,right,inner sep=2pt] at (3.4,1.75) {$\ell_B(\bar \sigma)$};
    \node[font=\scriptsize,above,inner sep=2pt] at (3.0,2.1) {$\ell_S(\bar \rho)$};

    \node at (1.05,0.8) {$\cD$};
    \node at (2.6,-0.55) {(a) Edge-selection region};
  \end{scope}

  \draw[->,thick] (6.2,1.8)--(8.2,1.8);
  \node[font=\scriptsize,above,inner sep=2pt] at (7.2,1.95)
    {$ (\rho,\sigma)\mapsto\bigl(e^{-h}\rho,\,e^{-(1-h)}\sigma\bigr) $};

  \begin{scope}[xshift=8.35cm]
    \path[draw=black!55,dashed,thick]
      (0,0)--(5.0,0)--(5.0,0.7)--(4.2,0.7)--(4.2,1.4)--(3.4,1.4)--(3.4,2.1)--(2.5,2.1)--(2.5,2.8)--(1.6,2.8)--(1.6,3.45)--(0,3.45)--cycle;
    \begin{scope}[xscale=0.70,yscale={exp(-1)/0.70}]
      \path[fill=black!18,draw=black!80,thick]
        (0,0)--(5.0,0)--(5.0,0.7)--(4.2,0.7)--(4.2,1.4)--(3.4,1.4)--(3.4,2.1)--(2.5,2.1)--(2.5,2.8)--(1.6,2.8)--(1.6,3.45)--(0,3.45)--cycle;
    \end{scope}
    \draw[->,thick] (0,0)--(5.35,0) node[right] {$\rho'$};
    \draw[->,thick] (0,0)--(0,3.8) node[above] {$\sigma'$};
    \node at (1.0,0.65) {$\cD_h$};
    \node at (2.6,-0.55) {(b) Rescaled region};
  \end{scope}
\end{tikzpicture}
\caption{The edge-selection region before and after the rescaling.
The product of the two scale factors is $e^{-1}$.}
\label{fig:local-geometry}
\end{figure}

\paragraph{Realizing the local profits.}
The local prices were optimized separately, so implementing them simultaneously requires an additional feasibility argument.
We use the stable-partner property proved by Babaioff et al.~\cite[Lemma~4.4]{BabaioffRubinsteinTanWang2026}: holding all other reports fixed, an agent has the same partner whenever she is matched by first best.
Thus each buyer faces at most one active offer, with a price independent of her own report.
This gives a truthful one-sided auction for the buyers; the seller-side construction is symmetric.

Every accepted offer belongs to the first-best matching at the realized profile.
The accepted edges therefore form a subset of one feasible matching, and downward closedness preserves feasibility, as in the assembly argument of Bei et al.~\cite{BeiLiWuZhou2026SecondBest}.
Since GSOM and GBOM---the two components of GRO---maximize the corresponding one-sided profits, their profits dominate those of these auxiliary auctions.
Summing the local inequality~\eqref{eq:intro-local} consequently gives the desired market-wide bound~\eqref{eq:main-profit}.

\paragraph{Tightness.}
Our tight examples start from a capped exponential buyer distribution and its reflected seller distribution, for which the local profit comparison approaches equality.
This alone does not establish tightness for GFT, because GFT also includes the buyers' or sellers' utilities.
GSOM and GBOM maximize seller-side and buyer-side \emph{virtual surplus}, obtained from GFT by replacing buyer values with their ironed virtual values or seller costs with their ironed virtual costs, respectively; see \cref{subsec:one-sided}.

We take many independent copies of this bilateral instance and add one deterministic trading edge.
The GFT from this edge lies above the expected maximum virtual surplus of the random edges for either component of GRO, but below their expected maximum GFT.
A single knapsack constraint permits either a subset of the random edges or the deterministic edge alone.
As the number of copies grows, concentration makes first best prefer the profitable random trades, while both components of GRO prefer the deterministic edge, with probability tending to one.
Choosing the GFT from the deterministic edge close to the virtual-surplus benchmark then converts the tight profit comparison into a GFT ratio approaching $1/e$.

\subsection{Other related work}\label{sec:related}

Early first-best approximation results for bilateral trade relied on additional distributional assumptions.
McAfee~\cite{McAfee2008} obtained a $1/2$ guarantee using a fixed price when the median buyer value exceeds the median seller cost.
Under a monotone hazard rate assumption on the buyer's distribution, Blumrosen and Mizrahi~\cite{BlumrosenMizrahi2016} established a $1/e$ guarantee for the seller-offering mechanism, which lets the seller choose a price maximizing her expected utility.
Fei~\cite{Fei2022} subsequently improved this guarantee to the tight bound $1/(e-1)$ under the same assumption.

Deng et al.~\cite{DengMaoSivanWangWu2025} study bilateral trade with limited distributional information, where the agent making the offer has sample access to the other agent's distribution.
Under their sampling assumptions, they prove that the better of the seller-offering and buyer-offering mechanisms gives a constant-factor approximation to first-best GFT when prices approximately maximize profit on the observed samples.
Hajiaghayi et al.~\cite{HajiaghayiHajiaghayiPengShin2025} study bilateral trade with a broker who maximizes expected buyer payments net of seller payments.
They establish constant-factor GFT guarantees under additional distributional assumptions and show that the fraction of first-best GFT can approach zero in general.
Our analysis assumes publicly known distributions and uses profit benchmarks for the buyer or seller side of the market.

\paragraph{Second-best guarantees.}
Another line of work measures simple mechanisms against the \emph{second-best} benchmark, the maximum expected GFT attainable by BIC, interim-IR, and ex-ante WBB mechanisms.
Br\"ustle et al.~\cite{BrustleCaiWuZhao2017} showed that GRO obtains one half of the second-best GFT under downward-closed feasibility.
Babaioff et al.~\cite{BabaioffCaiGonczarowskiZhao2018} combined offering mechanisms with trade reduction, building on McAfee~\cite{McAfee1992}, to obtain both a constant ex-ante approximation to second-best GFT and asymptotic first-best efficiency in double auctions and matching markets.

The efficiency of the second-best relative to the first-best has also been studied.
In bilateral trade, Blumrosen and Mizrahi~\cite{BlumrosenMizrahi2016} established an upper bound of $2/e$ on the worst-case first-best fraction of the second-best.
Liu et al.~\cite{LiuQinRenWang2026} subsequently determined the exact worst-case fraction to be $1/2$.
Bei et al.~\cite{BeiLiWuZhou2026SecondBest} extended this tight $1/2$ guarantee to downward-closed matching markets. Liu et al.~\cite{liu2026bilateraltradematchingmarkets} gave a general bilateral-to-matching reduction for second-best guarantees and used it to establish the worst-case guarantees under MHR and finite-support assumptions.

\paragraph{Multi-dimensional markets.}
Cai et al.~\cite{CaiGoldnerMaZhao2021} obtained an $O(\log^2 n)$ approximation to second-best GFT for a constrained-additive buyer and $n$ single-item sellers with independent item values and costs.
In their setting, the buyer has a multi-dimensional type, while each seller has only a single private cost.
Rubinstein et al.~\cite{RubinsteinTanZhou2026} study markets with multi-dimensional types on both sides: one additive seller owns multiple heterogeneous items, and buyers have XOS valuations, which are maxima of additive functions.
Under independence across item types and costs, they show that the equal mixture of the seller-optimal and buyer-optimal mechanisms obtains a $1/44$ fraction of first-best GFT with one buyer.
They also obtain a constant-factor approximation to first-best GFT with multiple XOS buyers.
Their mechanisms are BIC, interim-IR, and ex-ante WBB.
Their analysis combines a core--tail decomposition with reductions to single-dimensional matching markets, where it uses the same one-sided profit benchmarks that we bound in this paper.

\paragraph{Social welfare.}
A related line of work studies social welfare, which counts the values of both the buyers who receive items and the sellers who keep their items.
In bilateral trade with independent values, Blumrosen and Dobzinski~\cite{BlumrosenDobzinski2021} obtained a $(1-1/e)$-approximation to optimal expected welfare using fixed-price mechanisms.
Subsequent work improved fixed-price welfare guarantees under different assumptions about the available distributional information~\cite{KangPerniceVondrak2022,CaiWu2023,LiuRenWang2023}.
Colini-Baldeschi et al.~\cite{ColiniBaldeschiDeKeijzerLeonardiTurchetta2016,ColiniBaldeschiGoldbergDeKeijzerLeonardiRoughgardenTurchetta2020} obtained constant-factor welfare approximations in double auctions and two-sided combinatorial markets, respectively, using incentive-compatible, individually rational, and strongly budget-balanced mechanisms.
In these settings, welfare equals GFT plus the value of the items initially owned by the sellers, so a constant-factor welfare guarantee need not imply a constant-factor GFT guarantee.

\paragraph{Technical connections.}
D\"utting et al.~\cite{DuttingRoughgardenTalgamCohen2014} relate the incentive, budget-balance, and efficiency properties of double auctions to those of ranking algorithms for the two sides and their composition.
The one-sided auction tools in our proof build on Myerson's virtual-value framework~\cite{Myerson1981}.
The Meta-Auction of Babaioff et al.~\cite{BabaioffRubinsteinTanWang2026} regroups first-best outcomes into bilateral instances with a fixed buyer value and a capped seller distribution, and uses cap-monotonicity to establish buyer incentives.
Here, cap-monotonicity means that, with both agents' reports fixed, any trade made by the bilateral rule still occurs when the seller-cost cutoff increases~\cite[Definition~4.8]{BabaioffRubinsteinTanWang2026}.
Our analysis uses the two-dimensional edge-selection region and bounds its contribution to first-best GFT directly by the profits from prices optimized separately on its horizontal and vertical sections.
We also use the rank-space edge decomposition and down-set contractions of Bei et al.~\cite{BeiLiWuZhou2026SecondBest}, with a different local objective and choice of scales, as detailed in \cref{subsec:overview}.

Jo~\cite[Sections~3.2 and~4]{Jo2026} compares first-best GFT with the sum of the two offerer profits through a Lagrangian analysis, using quantile ironing to remove a seller regularity assumption.
Our local analysis constructs actual posted prices on contracted trade regions and uses the integral identity to evaluate buyer and seller contributions separately.

\section{Preliminaries}\label{sec:model}
We consider a general two-sided matching market with multiple buyers and sellers, trading on a single type of item.
Throughout this paper, we write $z_+:=\max\{z,0\}$ for the positive part of a real number $z$.

\subsection{Model and notation}\label{subsec:model-notation}

Let $B$ be a finite set of buyers, $S$ a finite set of sellers, and $E\subseteq B\times S$ the set of possible trading edges.
An allocation is a matching $M\subseteq E$ such that no agent is incident to more than one edge of $M$.
The feasibility of all allocations is described by a nonempty family $\cF$ of matchings.
Throughout the paper, $\cF$ is downward closed:
\begin{equation*}
  M\in\cF,\quad M'\subseteq M
  \quad\Longrightarrow\quad M'\in\cF.
\end{equation*}
In particular, $\varnothing\in\cF$.
The assumption allows ordinary graph matching together with additional cardinality, knapsack, capacity, or regional constraints, and arbitrary intersections of such restrictions.

Buyer $i$ has private value $v_i\ge0$ for receiving one unit of the item, drawn from a known distribution $F_i$.
Seller $j$ has private cost $c_j\ge0$ for supplying one unit of the item, drawn from a known distribution $G_j$.
All types are mutually independent, their distributions are common knowledge, and every distribution has finite first moment.
Write $\vct t=(\vct v,\vct c)$ for the type profile.
We use $v_i$ and $c_j$ both for random types and for their realizations.
In probabilistic expressions, they are understood to be drawn from $F_i$ and $G_j$, respectively.
For a realized profile $\vct t$ and $M\in\cF$, the gains from trade are
\begin{equation*}
  \GFT(M;\vct t):=\sum_{(i,j)\in M}(v_i-c_j).
\end{equation*}
Fix a strict order on $\cF$ for tie-breaking.
The first-best matching is the unique tie-broken maximizer
\begin{equation*}
  M^*(\vct t)\in\arg\max_{M\in\cF}\GFT(M;\vct t),
\end{equation*}
and its expected GFT is
\begin{equation*}
  \FB:=\E_{\vct t}\bigl[\GFT(M^*(\vct t);\vct t)\bigr].
\end{equation*}
This expectation is finite because the market is finite and every type has finite first moment.

A direct mechanism asks agents to report their types, selects a feasible possibly randomized matching, charges buyers, and pays sellers.
Utilities are quasilinear.
We use dominant-strategy incentive compatibility (DSIC), ex-post individual rationality (IR), and ex-ante weak budget balance (WBB) for the two-sided mechanisms.

DSIC requires that, for every agent and true type, truthful reporting maximizes expected utility for every fixed profile of the other agents' reports, averaging only over the mechanism's internal randomization.
Ex-post IR requires every truthful agent to have nonnegative utility for every profile of the other agents' reports and every realization of the mechanism's randomness.
Ex-ante WBB requires that, under truthful reporting, expected total buyer payments are at least expected total seller payments, averaging over the full type profile and the mechanism's internal randomization.

We use quantile representations for buyer values and seller costs throughout the analysis.
Write $\rho$ for buyer upper-tail ranks and $\sigma$ for seller lower-tail ranks.
For buyer $i$, let $Q_i:(0,1)\to[0,\infty)$ be a nonincreasing left-continuous upper-quantile function: if $\rho_i\sim\operatorname{Unif}(0,1)$, then $Q_i(\rho_i)\sim F_i$.
Thus smaller $\rho$ means a higher buyer value.
For seller $j$, let $Q_j$ be a nondecreasing left-continuous lower-quantile function: if $\sigma_j\sim\operatorname{Unif}(0,1)$, then $Q_j(\sigma_j)\sim G_j$.
Thus smaller $\sigma$ means a lower seller cost.
We couple all types through independent uniform ranks.

For generic buyer and seller quantile functions $Q_B$ and $Q_S$ representing random variables $v$ and $c$, define the revenue and procurement-cost curves
\begin{equation*}
  R(\rho):=\rho Q_B(\rho),\qquad K(\sigma):=\sigma Q_S(\sigma),\qquad R(0)=K(0)=0.
\end{equation*}
At $1$, define both curves by their left limits; $K(1)=+\infty$ is allowed when costs are unbounded.
For $0<x<1$, a threshold serving exactly the top $x$ buyer quantile mass has expected payment $R(x)$.
For $0<y<1$, a threshold procuring from exactly the bottom $y$ seller quantile mass has expected payment $K(y)$.
At atoms, the boundary type is randomized so that exactly the desired mass is served.
Appendix~\ref{app:quantile} records the convention.

\subsection{One-sided benchmarks}\label{subsec:one-sided}

The proof compares first-best GFT with two one-sided optimization problems.
To avoid ambiguity, in the seller-side problem seller costs are treated as parameters and the buyers are strategic; in the buyer-side problem buyer values are treated as parameters and the sellers are strategic.

For the seller-side benchmark, fix a realized seller-cost vector $\vct c$.
A buyer auction selects a feasible matching $M$ and charges buyer payments $p_i$.
Its objective is revenue from the buyers net of the realized seller costs:
\begin{equation*}
  \pi_S(M,\vct p;\vct t)
  :=\sum_{i\in B}p_i-\sum_{(i,j)\in M}c_j.
\end{equation*}
Let $\OPTS$ be the supremum of $\E[\pi_S]$ over jointly measurable families of feasible buyer auctions indexed by the seller-cost vector $\vct c$.
For every fixed $\vct c$, the corresponding auction must be Bayesian incentive compatible (BIC) and interim IR for the buyers.
Here, BIC requires that, for every buyer and true value, truthful reporting maximizes expected utility when the other buyers report truthfully, averaging over their types and the auction's internal randomization.
Interim IR requires this truthful expected utility to be nonnegative for every buyer and true value.
The expectation in $\E[\pi_S]$ is over the full type profile $(\vct v,\vct c)$ and the auction's internal randomization.

Symmetrically, for the buyer-side benchmark, fix a realized buyer-value vector $\vct v$.
A procurement auction selects a feasible matching $M$ and pays the sellers amounts $r_j$.
Its objective is buyer value net of seller payments:
\begin{equation*}
  \pi_B(M,\vct r;\vct t)
  :=\sum_{(i,j)\in M}v_i-\sum_{j\in S}r_j.
\end{equation*}
Let $\OPTB$ be the supremum of $\E[\pi_B]$ over jointly measurable families of feasible procurement auctions indexed by the buyer-value vector $\vct v$.
For every fixed $\vct v$, the corresponding auction must be BIC and interim IR for the sellers, with interim expectations taken over the other sellers' types and the auction's internal randomization.
The expectation in $\E[\pi_B]$ is over the full type profile $(\vct v,\vct c)$ and the auction's internal randomization.

For a two-sided mechanism $\mathcal A$, write $\Pi_S(\mathcal A):=\E[\pi_S(\mathcal A)]$ and $\Pi_B(\mathcal A):=\E[\pi_B(\mathcal A)]$ when evaluating the corresponding one-sided objective.
These are comparison benchmarks; they are not the intermediary's realized budget surplus in the two-sided implementation.

\paragraph{Ironed virtual values and costs.}
For buyer $i$, let $R_i(\rho):=\rho Q_i(\rho)$ and let $\overline R_i$ be the pointwise smallest concave function satisfying $\overline R_i\ge R_i$ on $[0,1]$.
For seller $j$, let $K_j(\sigma):=\sigma Q_j(\sigma)$ and let $\underline K_j$ be the pointwise largest convex function satisfying $\underline K_j\le K_j$ on $[0,1]$, using the extended endpoint convention in \cref{subsec:model-notation} when $K_j(1)=+\infty$.
For $\rho,\sigma\in(0,1)$, let $\Phi_i(\rho)$ and $\Psi_j(\sigma)$ be their interior left derivatives, respectively.
These are the ironed virtual value and virtual cost in quantile space; $\Phi_i$ is nonincreasing and $\Psi_j$ is nondecreasing.

For nonnegative type reports $v,c$, define the type-space ironed virtual functions by
\begin{align*}
  \overline\phi_i(v)&:=\sup\{\Phi_i(\rho):\rho\in(0,1),\ Q_i(\rho)\le v\},\\*
  \overline\psi_j(c)&:=\inf\{\Psi_j(\sigma):\sigma\in(0,1),\ Q_j(\sigma)\ge c\},
\end{align*}
with $\sup\varnothing=-\infty$ and $\inf\varnothing=+\infty$.
These functions are nondecreasing in the reported type and agree almost surely with the corresponding quantile slopes.
They satisfy $\overline\phi_i(v)\le v$ and $\overline\psi_j(c)\ge c$.
They also specify the values for reports in gaps or outside the support, so the threshold payments below are unambiguous.

\paragraph{GSOM and GBOM.}
Using these ironed virtual functions, GSOM and GBOM select, respectively,
\begin{align*}
  M^S(\vct t)&\in\arg\max_{M\in\cF}\sum_{(i,j)\in M}\bigl(\overline\phi_i(v_i)-c_j\bigr),\\
  M^B(\vct t)&\in\arg\max_{M\in\cF}\sum_{(i,j)\in M}\bigl(v_i-\overline\psi_j(c_j)\bigr).
\end{align*}
We refer to these two objectives as seller-side and buyer-side virtual surplus, respectively.
Both rules break ties using the fixed strict order on $\cF$.
An edge with score $-\infty$ is never selected.
Both mechanisms use normalized threshold payments on each side.
Holding all other reports fixed, a matched buyer pays the infimum nonnegative report at which she is matched, and a matched seller receives the supremum nonnegative report at which she is matched.
Unmatched agents make and receive zero payments.
At a threshold report, allocation is determined by the same fixed tie-breaking rule.
A matched seller's threshold payment is at most the highest buyer value, so it is finite even when her cost distribution has unbounded support.

We use the following standard implementation fact.
One-sided optimality is the usual ironed virtual-surplus characterization.
The two-sided DSIC, ex-post-IR, ex-ante-WBB implementations are due to Br\"ustle et al.~\cite{BrustleCaiWuZhao2017}.
Appendix~\ref{app:one-sided} recalls the one-sided argument.

\begin{proposition}[GSOM and GBOM]\label{prop:gsom-gbom}
There are two-sided implementations of GSOM and GBOM satisfying
\begin{equation*}
  \Pi_S(\GSOM)=\OPTS,
  \qquad
  \Pi_B(\GBOM)=\OPTB.
\end{equation*}
Both implementations are DSIC, ex-post IR, and ex-ante WBB.
The statement applies to all type distributions in our model, including distributions with point masses.
\end{proposition}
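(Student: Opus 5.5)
The proof has two parts: the one-sided optimality identities, and the two-sided implementation properties. Throughout I argue the seller side (GSOM and $\OPTS$); the buyer side (GBOM and $\OPTB$) is symmetric, with $\overline\psi_j$, $\underline K_j$ and procurement in place of $\overline\phi_i$, $\overline R_i$ and selling.

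\emph{Upper bound $\OPTS\le \E[\max_{M\in\cF}\sum_{(i,j)\in M}(\overline\phi_i(v_i)-c_j)]$.}
I fix the seller-cost vector $\vct c$ and any BIC, interim-IR buyer auction. Write $x_i(\rho)$ for buyer $i$'s interim allocation probability at rank $\rho$; BIC makes it nonincreasing in $\rho$.
\begin{enumerate}
\item Myerson's payment identity in quantile space, with the interim-IR normalization, bounds expected payments by $\E[\sum_i x_i(\rho_i)R_i'(\rho_i)]$. Here I use Riemann--Stieltjes integration by parts, which handles atoms because the boundary type is randomized as in Appendix~\ref{app:quantile}.
\item Since $x_i$ is monotone, replacing $R_i$ by its concave hull $\overline R_i$ can only increase this bound. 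This is the standard ironing inequality $\int x\,d(\overline R-R)\ge0$, which uses that $\overline R_i=R_i$ wherever $x_i$ is not locally constant, or more simply that $\overline R_i-R_i\ge0$ vanishes at the endpoints and $x_i$ is monotone.
\item The resulting expression is the expected seller-side virtual surplus of the auction's allocation, so pointwise maximization over $\cF$ bounds it.
\item Finally, $\Phi_i(\rho_i)=\overline\phi_i(Q_i(\rho_i))$ almost surely, so the bound is stated in terms of $\overline\phi_i(v_i)$.
\end{enumerate}

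\emph{Achievability.} The rule $M^S$ is monotone in each $v_i$: raising $v_i$ raises only the scores of edges at $i$, so with the fixed tie-breaking order the allocation is monotone. Threshold payments therefore give a DSIC, ex-post IR buyer auction.

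Its revenue equals the expected virtual surplus, and this is the step that needs care. On ironed intervals $\overline\phi_i$ is constant, so the allocation is constant in $v_i$ there; for such constant allocations the ironing inequality in step 2 holds with equality. At atoms, the threshold payment together with the left-continuous quantile convention matches the Stieltjes computation. Both computations are recorded in Appendix~\ref{app:one-sided}.

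This gives $\Pi_S(\GSOM)=\E[\pi_S]=$ the maximum expected virtual surplus $\ge\OPTS$. Combined with the upper bound, $\Pi_S(\GSOM)=\OPTS$. Measurability in $\vct c$ is immediate because the rule is a finite argmax.

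\emph{Two-sided properties.} I implement GSOM as follows: the allocation is $M^S$, matched buyers pay their thresholds, and matched sellers receive their thresholds computed under $M^S$ viewed as a function of $c_j$.
\begin{itemize}
\item \emph{DSIC and ex-post IR.} $M^S$ is also monotone, in the decreasing direction, in each $c_j$, and both sides face threshold prices. Hence DSIC and ex-post IR hold for every agent.
\item \emph{Ex-ante WBB.} Expected buyer payments equal the expected virtual surplus plus $\E[\sum_{(i,j)\in M^S} c_j]$. Expected seller payments are at most $\E[\sum_{(i,j)\in M^S}\overline\phi_i(v_i)]$, because a matched seller's threshold never exceeds her partner's virtual score: above that score the edge has negative score and deleting it is feasible by downward closedness. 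By construction, expected buyer payments are at least this same quantity.
\end{itemize}
This is the argument of Br\"ustle et al.~\cite{BrustleCaiWuZhao2017}, which I would cite rather than redo. Its only delicate point is that the seller threshold is finite, which was noted above.

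The main obstacle I expect is the exact revenue equality in the achievability step for general distributions with atoms and gaps, that is, matching the type-space threshold payments to the quantile-space ironed curve. I would handle it by working entirely in rank space, using the randomized boundary convention, and verifying the payment identity interval by interval.
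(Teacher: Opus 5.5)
Your proposal is correct and takes essentially the paper's route. One-sided optimality comes from the quantile payment identity and Myerson's ironing inequality, with equality for the virtual-surplus maximizer because it is monotone, constant on ironing and atom intervals, and uses normalized threshold payments (Appendix~\ref{app:one-sided}); the two-sided DSIC, ex-post IR and ex-ante WBB properties are taken from Br\"ustle et al.~\cite{BrustleCaiWuZhao2017}, and your WBB sketch (a matched seller's threshold is at most her stable partner's ironed virtual value) correctly unpacks that citation.

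One small slip: in step 2, the upper bound for an arbitrary BIC auction may use only that $\overline R_i-R_i\ge0$ vanishes at the endpoints and $x_i$ is monotone; constancy of $x_i$ where $\overline R_i>R_i$ is the equality condition for the achievability step, not something the inequality may assume.
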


The next identity is elementary but useful: one-sided profit is obtained from GFT by subtracting the strategic side's utility.

\begin{proposition}\label{lem:gft-profit}
For every seller-side auction that is interim IR for the buyers,
\[
  \E[\GFT]\ge \E[\pi_S].
\]
For every buyer-side auction that is interim IR for the sellers,
\[
  \E[\GFT]\ge \E[\pi_B].
\]
\end{proposition}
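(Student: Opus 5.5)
The plan is to write the gains from trade of a one-sided auction as its profit plus the total utility of the strategic side. Interim IR then makes that utility term nonnegative in expectation. I treat the seller-side case; the buyer-side case is symmetric with the roles of buyers and sellers exchanged.

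\emph{Step 1: a pointwise identity.} Fix a seller-side auction, i.e., a measurable family of buyer auctions indexed by $\vct c$. For a realized profile $\vct t$ and a realization of the internal randomness, let $M$ be the selected matching and $\vct p$ the payments. Write $x_i\in\{0,1\}$ for the indicator that buyer $i$ is matched in $M$; since $M$ is a matching, each buyer lies on at most one edge. Buyer $i$'s realized utility is $u_i:=x_i v_i-p_i$. Then
\begin{equation*}
  \GFT(M;\vct t)=\sum_{(i,j)\in M}(v_i-c_j)
  =\sum_{i\in B}x_iv_i-\sum_{(i,j)\in M}c_j
  =\pi_S(M,\vct p;\vct t)+\sum_{i\in B}u_i .
\end{equation*}

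\emph{Step 2: taking expectations.} First condition on $\vct c$. The buyer auction for that $\vct c$ is interim IR, so for every buyer $i$ and true value $v_i$, the expectation of $u_i$ over the other buyers' values and the auction's randomness is nonnegative. Buyer values are independent of $\vct c$, so under this conditioning $v_{-i}$ still has its prior distribution, which is exactly the interim expectation in the IR definition. Averaging next over $v_i$ and then over $\vct c$ gives $\E[u_i]\ge0$. Summing over $i$ and using the identity from Step 1 yields $\E[\GFT]=\E[\pi_S]+\sum_i\E[u_i]\ge\E[\pi_S]$.

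\emph{Step 3: the buyer-side case.} Set $y_j:=\1[j\text{ is matched in }M]$ and define seller $j$'s utility as $r_j-y_jc_j$. The analogous identity is $\GFT=\pi_B+\sum_j(r_j-y_jc_j)$. Interim IR for the sellers, together with independence of $\vct c$ from $\vct v$, gives the result in the same way.

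\emph{Main obstacle: integrability.} Splitting the expectation in Step 2 requires each term to be integrable, or at least not of the form $\infty-\infty$. $\GFT$ is integrable because types have finite first moments. Each $u_i$ has a conditional expectation that is nonnegative, hence bounded below. The first thing to try is to compute everything conditional on $\vct c$ and treat the case $\E[\pi_S]=-\infty$ trivially, since then the inequality holds automatically. If $\E[\pi_S]>-\infty$, the identity shows $\sum_iu_i=\GFT-\pi_S$ is bounded above by an integrable function, so the expectations combine without ambiguity.
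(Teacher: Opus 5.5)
Your proposal is correct and follows the paper's proof: the paper uses the same pointwise identity $\GFT-\pi_S=\sum_{i\in B}u_i$ (and its seller analogue with $u_j=r_j-q_jc_j$) and then takes expectations, using interim IR to get a nonnegative utility term. Your extra remarks, on conditioning on $\vct c$ via independence and on integrability, fill in details that the paper leaves implicit.
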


\begin{proof}
Let $q_i$ and $q_j$ be the allocation indicators of buyer $i$ and seller $j$, respectively, and let $u_i=q_iv_i-p_i$ be the buyer's realized utility.
Pointwise,
\[
  \GFT(M;\vct t)-\pi_S(M,\vct p;\vct t)=\sum_{i\in B}u_i.
\]
Taking expectations and using interim IR proves the first inequality.
For the buyer-side problem, seller utility is $u_j=r_j-q_jc_j$, and
\[
  \GFT(M;\vct t)-\pi_B(M,\vct r;\vct t)=\sum_{j\in S}u_j.
\]
The second inequality follows in the same way.
\end{proof}

\section{An \texorpdfstring{$e$}{e}-Approximation for Matching Markets}\label{sec:approximation}

In this section, we prove that GRO obtains at least a $1/e$ fraction of first-best GFT in downward-closed matching markets.
Our proof compares first-best GFT with the two one-sided profit benchmarks introduced in \cref{subsec:one-sided}.

\begin{theorem}[Main theorem]\label{thm:main}
Consider a finite Bayesian matching market with publicly known, mutually independent distributions over nonnegative buyer values and seller costs, each with finite first moment.
Suppose the feasible allocations form a downward-closed family of matchings.
Let $\mathcal M$ denote GRO, i.e., the mechanism that runs GSOM or GBOM with equal probability, independently of all reports.
Then
\begin{equation*}
  \E[\GFT(\mathcal M)]\ge \frac1e\,\FB.
\end{equation*}
Moreover, $\mathcal M$ is dominant-strategy incentive compatible, ex-post individually rational, and ex-ante weakly budget balanced.
\end{theorem}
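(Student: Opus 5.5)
The plan follows the outline of \cref{subsec:overview}. The incentive and budget properties are immediate from \cref{prop:gsom-gbom}. For the GFT guarantee, \cref{lem:gft-profit} gives $\E[\GFT(\mathcal M)]\ge\frac12(\Pi_S(\GSOM)+\Pi_B(\GBOM))=\frac12(\OPTS+\OPTB)$. It therefore suffices to prove
\[
  \FB\le\frac e2\bigl(\OPTS+\OPTB\bigr).
\]

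\textbf{Step 1 (edge decomposition).} Couple all types through independent uniform ranks and write $\FB=\sum_{(i,j)\in E}\E[\1\{(i,j)\in M^*\}(v_i-c_j)]$. Fix an edge $(i,j)$ and the ranks $\vct t_{-ij}$ of all other agents. Let $\cD=\cD_{ij}(\vct t_{-ij})\subseteq(0,1)^2$ be the set of $(\rho_i,\sigma_j)$ for which first best selects $(i,j)$. I will check that $\cD$ is a down-set. If $(i,j)$ is chosen and $v_i$ is raised or $c_j$ lowered, every matching containing $(i,j)$ gains at least as much as any other matching, so $(i,j)$ stays selected. Some care with the fixed tie-breaking order is needed here, but ties are harmless under monotone changes. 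The contribution of this edge is then $\E_{\vct t_{-ij}}\int_\cD(Q_i(\rho)-Q_j(\sigma))\dd\rho\dd\sigma$.

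\textbf{Step 2 (local inequality, the main obstacle).} I need to show \eqref{eq:intro-local} for every down-set $\cD$. Draw $h\sim\operatorname{Unif}(0,1)$ and contract $\cD$ to $\cD_h=\{(e^{-h}\rho,e^{-(1-h)}\sigma):(\rho,\sigma)\in\cD\}$. The horizontal rule works as follows: a seller at rank $\sigma$ posts a price serving buyer mass $e^{-h}\ell_B(e^{1-h}\sigma)$ when $e^{1-h}\sigma<1$, and makes no offer otherwise. Since $\cD$ is a down-set and $e^{-h}<1$, this mass lies inside the section $\ell_B(\sigma)$, so the rule is admissible for $P_S$. The vertical rule is defined symmetrically. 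Use the horizontal rule with probability $1-h$ and the vertical rule with probability $h$.

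The key computation evaluates the buyer contribution, averaged over $h$ with these weights: buyer payments $R(e^{-h}\ell_B(\cdot))$ integrated over $\sigma\in e^{-(1-h)}(0,1)$ under the horizontal rule, and buyer values over $\cD_h$ under the vertical rule. After the change of variables $\sigma=e^{-(1-h)}\sigma'$, I expect a Fubini/integration-by-parts argument in the scale variable to show that this average equals $e^{-1}\int_\cD Q_B$. This is the identity the overview calls \cref{lem:exponential-scale}, and I would prove it first in the pure form $\int_0^1\bigl[(1-h)e^{-(1-h)}R(e^{-h}\ell)+he^{-(1-h)}\int_0^{e^{-h}\ell}Q_B\bigr]\dd h=e^{-1}\int_0^\ell Q_B$ and then integrate over sections. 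The seller side is symmetric, and subtracting the two identities shows the expected profit is exactly $e^{-1}$ times the GFT on $\cD$. Each rule is used with total probability $1/2$ and is bounded by its section supremum, which gives \eqref{eq:intro-local}. Atoms are handled by the randomized-boundary convention of Appendix~\ref{app:quantile}.

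\textbf{Step 3 (assembly).} I will realize $\sum_{(i,j)}\E\,P_S(\cD_{ij})$ by a single BIC, IR buyer auction, so that it is at most $\OPTS$; the buyer side is symmetric with $\OPTB$. Given $\vct c$ and the other buyers' reports, the stable-partner property \cite[Lemma~4.4]{BabaioffRubinsteinTanWang2026} says each buyer $i$ has at most one potential first-best partner $j$. I offer $i$ the section-optimal price for $(\cD_{ij},\sigma_j)$, which does not depend on $i$'s report, so the auction is DSIC and IR. Every accepted offer lies inside $\cD_{ij}$, so the accepted edges are a subset of $M^*(\vct t)$ and hence feasible by downward closedness. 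Its expected $\pi_S$ equals the sum of the local $P_S$ terms. Summing Step 2 over edges and taking expectations yields the displayed bound, and hence the theorem.
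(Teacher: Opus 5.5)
Your proposal is correct and follows the paper's proof essentially step for step. The shared steps are the reduction to $\FB\le\frac e2(\OPTS+\OPTB)$, the down-set edge-selection regions, the contraction $(\rho,\sigma)\mapsto(e^{-h}\rho,e^{-(1-h)}\sigma)$ with weights $1-h$ and $h$ together with the same exponential scale identity, and the stable-partner assembly.

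One detail needs tightening. The supremum over $x<\ell_B$ need not be attained, so in Step 3 you should use $\varepsilon$-optimal, measurably chosen thresholds strictly inside each section, as the paper does. This keeps accepted offers inside $M^*(\vct t)$, including at atoms via the non-splitting of atom intervals, and recovers the sum of local benchmarks up to $\varepsilon$ rather than exactly.
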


By \cref{prop:gsom-gbom,lem:gft-profit}, it is enough to prove the benchmark inequality
\begin{equation*}
  \FB\le \frac e2(\OPTS+\OPTB).
\end{equation*}
We prove this inequality by reducing first-best GFT to edge-level benchmarks and realizing these benchmarks through feasible one-sided auctions.
We state the key local inequality in \cref{subsec:local-problem} and use it to derive the approximation guarantee in \cref{subsec:implementation}, before proving the inequality in \cref{subsec:local-proof}.

\subsection{First-best edge-selection regions}\label{subsec:edge-regions}

We recall the rank-space representation used by Bei et al.~\cite{BeiLiWuZhou2026SecondBest} and the monotonicity properties of Babaioff et al.~\cite[Lemmas~4.3 and~4.4]{BabaioffRubinsteinTanWang2026}.
We include proofs under our tie-breaking and distributional conventions.

Fix an edge $e=(i,j)$.
Let $\omega_{-e}$ denote the ranks of all agents other than buyer $i$ and seller $j$.
Write $\vct t_{-ij}(\omega_{-e})$ for the corresponding external type profile.
Conditional on $\omega_{-e}$, define the first-best selection region of $e$ by
\begin{equation*}
  \cD_e^{\omega_{-e}}
  :=\Bigl\{(\rho,\sigma):
      e\in M^*\bigl(Q_i(\rho),Q_j(\sigma),\vct t_{-ij}(\omega_{-e})\bigr)
    \Bigr\}.
\end{equation*}

\begin{proposition}[Edge-selection regions]\label{prop:edge-region}
For every edge $e=(i,j)$ and external profile $\omega_{-e}$:
\begin{enumerate}[label=(\roman*)]
  \item $\cD_e^{\omega_{-e}}$ is a measurable down-set;
  \item $Q_i(\rho)\ge Q_j(\sigma)$ for almost every $(\rho,\sigma)\in \cD_e^{\omega_{-e}}$; and
  \item first-best GFT decomposes as
  \begin{equation*}
    \FB
    =\sum_{e=(i,j)\in E}
      \E_{\omega_{-e}}
      \left[
        \int_{\cD_e^{\omega_{-e}}}
          \bigl(Q_i(\rho)-Q_j(\sigma)\bigr)\dd \rho\dd \sigma
      \right].
  \end{equation*}
\end{enumerate}
\end{proposition}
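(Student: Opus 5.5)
We prove the three parts in the order (i), (ii), (iii). Throughout, $\omega_{-e}$ is fixed, and we write $\vct t(\rho,\sigma):=(Q_i(\rho),Q_j(\sigma),\vct t_{-ij}(\omega_{-e}))$.

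\textbf{Down-set property.} The core of (i) is a monotonicity argument for the tie-broken maximizer. Suppose $e\in M^*(\vct t(\rho,\sigma))$ and $\rho'\le\rho$, $\sigma'\le\sigma$. Then $v_i':=Q_i(\rho')\ge v_i$ and $c_j':=Q_j(\sigma')\le c_j$, since $Q_i$ is nonincreasing and $Q_j$ is nondecreasing. Set $\Delta:=(v_i'-v_i)+(c_j-c_j')\ge0$.

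Passing to the new profile raises the GFT of every matching containing $e$ by exactly $\Delta$. Every matching avoiding $e$ contains at most one of the edges at $i$ or at $j$ besides $e$, so its GFT rises by at most $\Delta$. More precisely, its change is $(v_i'-v_i)\1[i\text{ matched}]+(c_j-c_j')\1[j\text{ matched}]\le\Delta$.

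Let $M=M^*(\vct t(\rho,\sigma))\ni e$. For any $M'\in\cF$ with $e\notin M'$, we had $\GFT(M)\ge\GFT(M')$, with strict inequality or $M$ preferred by the fixed order. After the change the gap does not shrink. Hence $M$ still beats every matching avoiding $e$.

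Matchings containing $e$ all shift by the same $\Delta$, so their ranking among themselves is unchanged and $M$ remains the best of them. Therefore $M$ is still the maximizer, $e\in M^*(\vct t(\rho',\sigma'))$, and the region is a down-set.

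\textbf{Measurability.} This part of (i) is routine. Since $\cF$ is finite, $\{e\in M^*\}$ is a finite Boolean combination of the sets $\{\GFT(M;\cdot)>\GFT(M';\cdot)\}$ and $\{\GFT(M;\cdot)\ge\GFT(M';\cdot)\}$. These are Borel in $(\rho,\sigma)$ because the quantile functions are monotone, hence Borel. Alternatively, any down-set of $(0,1)^2$ is Lebesgue measurable, since its boundary is the graph of a monotone function and so has measure zero.

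\textbf{Part (ii).} If $e\in M^*$ but $Q_i(\rho)<Q_j(\sigma)$, then $M^*\setminus\{e\}\in\cF$ by downward closure and has strictly larger GFT. This contradicts optimality. So in fact the inequality holds for every point of the region, not merely almost every point.

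\textbf{Part (iii).} Write
\[
\GFT(M^*(\vct t);\vct t)=\sum_{e=(i,j)\in E}\1[e\in M^*(\vct t)]\,(v_i-c_j).
\]
Couple the types through independent uniform ranks and take expectations. For each edge, apply Fubini to split the ranks into $(\rho_i,\sigma_j)$ and $\omega_{-e}$, which are independent. The inner integral over $(\rho,\sigma)$ of the indicator times $(Q_i(\rho)-Q_j(\sigma))$ is exactly the integral over $\cD_e^{\omega_{-e}}$.

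Fubini and the interchange of sum and expectation are justified by integrability. We have $|v_i-c_j|\le v_i+c_j$, and every type has a finite first moment. The edge set is finite, so linearity finishes the argument.

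The only delicate step is the down-set argument, specifically checking that tie-breaking is preserved. This is handled by the observation above that the gap between $M$ and any competitor weakly increases, while all matchings containing $e$ shift together.
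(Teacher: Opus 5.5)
Your proof is correct and follows the paper's argument: the same tie-breaking-preserving monotonicity argument for the down-set, deletion of a negative-GFT edge for (ii), finitely many objective comparisons for measurability, and an edgewise conditioning on $\omega_{-e}$ for (iii). The differences are minor: you move both coordinates at once (valid, since a matching avoiding $e$ gains at most $\Delta$; it can use both $i$ and $j$ through other partners, contrary to your wording, but your indicator formula already covers that case), you justify the interchange by integrability rather than by nonnegativity from (ii), and you correctly note that (ii) holds pointwise.
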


\begin{proof}
Recall that the external ranks $\omega_{-e}$ determine the types $\vct t_{-ij}(\omega_{-e})$ of all agents other than buyer $i$ and seller $j$.
Consider a type profile $\vct t=(v_i,c_j,\vct t_{-ij}(\omega_{-e}))$ such that $e=(i,j)\in M^*(\vct t)$.
Increase buyer $i$'s value by $\delta\ge0$, holding all other types fixed.
Every matching containing buyer $i$ gains exactly $\delta$ in objective value, while every matching omitting $i$ is unchanged.
Relative comparisons among matchings that contain $i$ therefore do not change, and a matching that omits $i$ cannot overtake the selected matching.
The fixed tie-breaking order selects the same matching.
Lowering seller $j$'s cost is symmetric.
In quantile coordinates, these two improvements are precisely $\rho'\le \rho$ and $\sigma'\le \sigma$, proving the down-set property.

Every edge in a first-best matching contributes nonnegative GFT.
If a selected edge had $v_i-c_j<0$, deleting that edge would preserve feasibility by downward closedness and strictly increase GFT.
This proves (ii).
Measurability follows because $\cF$ is finite and the tie-broken maximizer is determined by finitely many comparisons of measurable functions.

Finally,
\[
  \GFT(M^*;\vct t)
  =\sum_{e=(i,j)\in E}(v_i-c_j)\1\{e\in M^*\}.
\]
Represent all types by independent quantiles.
In the term for edge $e$, condition on every quantile except those of its two endpoints.
The remaining indicator is exactly $\1_{\cD_e^{\omega_{-e}}}$.
Summing over edges and applying Tonelli proves (iii); nonnegativity from (ii) justifies the interchange.
\end{proof}

The same monotonicity gives a second property that will be used when we assemble the local posted prices.

\begin{corollary}[Stable partner]\label{cor:stable-partner}
Fix all reports except buyer $i$'s report.
As buyer $i$ varies her report, all first-best matchings in which she is matched use the same seller.
The symmetric statement holds for every seller.
\end{corollary}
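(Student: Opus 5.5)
Fix all reports other than buyer $i$'s, and suppose $M_1=M^*(v,\vct t_{-i})$ and $M_2=M^*(v',\vct t_{-i})$ both match buyer $i$, to sellers $j_1$ and $j_2$. We want to show $j_1=j_2$. We may assume $v<v'$.

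The monotonicity argument in the proof of \cref{prop:edge-region}(i) already handles this. Raise buyer $i$'s value from $v$ to $v'$ with all other types fixed. Every matching that contains buyer $i$ gains exactly $v'-v$ in objective value, and every matching that omits $i$ is unchanged. So the relative order among matchings containing $i$ is preserved, and no matching omitting $i$ can overtake $M_1$, which contains $i$.

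Hence $M_1$ remains a GFT maximizer at report $v'$. It also remains the tie-broken maximizer. Any competitor that is tied with $M_1$ at $v'$ and contains $i$ was also tied with $M_1$ at $v$, and it lost the fixed strict order then. Any competitor that omits $i$ and is tied at $v'$ would have strictly beaten $M_1$ at $v$, which is impossible. Therefore $M_2=M^*(v',\vct t_{-i})=M_1$. In particular, the whole first-best matching stays the same as long as buyer $i$ stays matched, so $j_1=j_2$.

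The seller statement follows by the same argument, lowering seller $j$'s cost instead of raising a value. The only delicate point is tie-breaking: the argument must use the unique tie-broken maximizer $M^*$, and the fixed strict order on $\cF$ ensures that the selected matching is invariant, not merely optimal.
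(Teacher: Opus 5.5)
Your proof is correct and follows the paper's argument. You order the two reports, raise buyer $i$'s value from the lower to the higher one, and use the monotonicity argument from \cref{prop:edge-region} to conclude that the entire tie-broken first-best matching, and hence buyer $i$'s partner, is unchanged; your explicit check of tied competitors under the fixed strict order spells out a step the paper leaves implicit.
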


\begin{proof}
Take two reports at which buyer $i$ is matched, and let the lower report be the first one.
Increasing the report from the lower to the higher value preserves the entire tie-broken first-best matching by the argument in \cref{prop:edge-region}.
Hence the buyer has the same partner at both reports.
The seller statement is symmetric.
\end{proof}

\begin{remark}[Why the region is two-dimensional]\label{rem:nonrectangle}
Even after all other types are fixed, $\cD_e^{\omega_{-e}}$ is generally not a rectangle.
If the best feasible alternative to using edge $e$ generates GFT $w$, then selection of $e$ may require $Q_i(\rho)-Q_j(\sigma)\ge w$.
The buyer threshold then depends on the seller quantile and vice versa.
The local inequality below must account for this dependence.
\end{remark}

\subsection{A local bound on GFT}\label{subsec:local-problem}

A measurable set $\cD\subseteq(0,1)^2$ is a down-set if $(\rho,\sigma)\in \cD$ and $0<\rho'\le \rho$, $0<\sigma'\le \sigma$ imply $(\rho',\sigma')\in \cD$.
Define the horizontal and vertical section lengths
\begin{align*}
  \ell_B(\sigma)&:=\sup\{\rho\in(0,1):(\rho,\sigma)\in\cD\},\\
  \ell_S(\rho)&:=\sup\{\sigma\in(0,1):(\rho,\sigma)\in\cD\},
\end{align*}
with the supremum of an empty section interpreted as zero.
Since $\cD$ is a down-set, each section is an initial interval, with its upper endpoint possibly included:
\begin{equation*}
  \1_\cD(\rho,\sigma)
  =\1\{0<\rho<\ell_B(\sigma)\}
  =\1\{0<\sigma<\ell_S(\rho)\}
\end{equation*}
for almost every $(\rho,\sigma)$.
Both $\ell_B$ and $\ell_S$ are measurable and nonincreasing.
Their dependence on the fixed region $\cD$ is suppressed in the notation.

Let $Q_B$ be a nonincreasing integrable buyer quantile function and $Q_S$ a nondecreasing integrable seller quantile function.
Assume that trade generates nonnegative GFT on $\cD$:
\begin{equation*}
  Q_B(\rho)\ge Q_S(\sigma)
  \qquad\text{for almost every }(\rho,\sigma)\in \cD.
\end{equation*}

Fix $\sigma$.
A threshold serving the top $x$ buyer mass remains inside the horizontal section whenever $x<\ell_B(\sigma)$, and then earns expected profit $x(Q_B(x)-Q_S(\sigma))$.
We therefore define the horizontal posted-price benchmark
\begin{equation*}
  P_S(\cD)
  :=\int_0^1
      \sup_{0\le x<\ell_B(\sigma)}
      x\bigl(Q_B(x)-Q_S(\sigma)\bigr)\dd \sigma.
\end{equation*}
Symmetrically, the vertical procurement benchmark is
\begin{equation*}
  P_B(\cD)
  :=\int_0^1
      \sup_{0\le y<\ell_S(\rho)}
      y\bigl(Q_B(\rho)-Q_S(y)\bigr)\dd \rho.
\end{equation*}
Not trading is always available with zero profit, and empty sections contribute zero, so both quantities are nonnegative.
They are finite; for example, $xQ_B(x)\le\int_0^xQ_B(s)\dd s\le\E[v]$.

These benchmarks also have a geometric interpretation.
For fixed $\sigma$, plot the GFT $Q_B(\rho)-Q_S(\sigma)$ against the buyer rank $\rho$ on the horizontal section.
The profit from threshold $x$ is the area of a rectangle of width $x$ and height $Q_B(x)-Q_S(\sigma)$ beneath this curve, with its lower-left corner at the origin.
Averaging the supremum of these rectangle areas over $\sigma$ gives $P_S(\cD)$; averaging the full areas under the curves gives the contribution of $\cD$ to first-best GFT.

The following example illustrates the two benchmarks; Figure~\ref{fig:posted-price-profit} depicts the seller-side profit on one horizontal section.
If $Q_B(\rho)=1-\rho$, $Q_S(\sigma)=\sigma$, and $\cD=\{(\rho,\sigma):\rho+\sigma<1\}$, then the contribution of $\cD$ to first-best GFT is $1/6$.
For a horizontal section at $\sigma$, the best threshold solves $\max_{0\le x\le1-\sigma}x(1-\sigma-x)$ and has value $(1-\sigma)^2/4$, so $P_S(\cD)=1/12$.
By symmetry $P_B(\cD)=1/12$.
Thus the two section benchmarks sum to first-best GFT in this triangular example.
It is not tight; \cref{sec:tight} gives a family for which the ratio approaches $e/2$.

\begin{figure}[htbp]
\centering
\begin{tikzpicture}[x=1.15cm,y=0.95cm,>=Latex,font=\small]
  \path[fill=black!8] (0,0)--(5,0)--(0,3)--cycle;
  \path[fill=black!23,draw=black!70,thick] (0,0) rectangle (2.5,1.5);
  \draw[thick] (0,3)--(5,0);
  \draw[->,thick] (0,0)--(5.65,0) node[right] {$\rho$};
  \draw[->,thick] (0,0)--(0,3.45) node[above] {GFT};
  \node[below left] at (0,0) {$0$};
  \draw (2.5,0)--(2.5,-0.08) node[below] {$x$};
  \draw (5,0)--(5,-0.08) node[below] {$\ell_B(\sigma)=1-\sigma$};
  \draw (0,1.5)--(-0.08,1.5) node[left] {$Q_B(x)-Q_S(\sigma)$};
  \draw (0,3)--(-0.08,3) node[left] {$1-\sigma$};
  \fill (2.5,1.5) circle (1.5pt);
  \node[align=center] at (1.25,0.75) {Seller-side\\profit};
  \node[above right] at (3.4,0.96) {$Q_B(\rho)-Q_S(\sigma)$};
\end{tikzpicture}
\caption{Posted-price profit at a fixed seller rank $\sigma$ in the uniform example.
The largest rectangle has width $x=(1-\sigma)/2$ and height $Q_B(x)-Q_S(\sigma)$; its area is the optimal seller-side profit on this section.
The full shaded triangle represents the contribution of this section to first-best GFT.
Averaging the optimal rectangle areas over $\sigma$ gives $P_S(\cD)$.}
\label{fig:posted-price-profit}
\end{figure}

\begin{theorem}[Local bound on GFT]\label{thm:local}
For every $Q_B,Q_S,\cD$ satisfying the preceding assumptions,
\begin{equation}\label{eq:local-ineq}
  \int_\cD\bigl(Q_B(\rho)-Q_S(\sigma)\bigr)\dd \rho\dd \sigma
  \le
  \frac e2\bigl(P_S(\cD)+P_B(\cD)\bigr).
\end{equation}
\end{theorem}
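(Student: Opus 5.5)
The plan is to follow the random-contraction strategy from \cref{subsec:overview}. Draw $h$ uniformly from $(0,1)$ and let $\cD_h:=\{(e^{-h}\rho,e^{-(1-h)}\sigma):(\rho,\sigma)\in\cD\}$. The boundaries of $\cD_h$ define two posted-price rules.

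\textbf{The two rules.}
\begin{itemize}
  \item \emph{Horizontal rule.} At seller rank $\sigma'$, serve the top $x_h(\sigma'):=e^{-h}\ell_B(e^{1-h}\sigma')$ buyer mass, where $\ell_B$ is set to $0$ outside $(0,1)$.
  \item \emph{Vertical rule.} At buyer rank $\rho'$, procure the bottom $y_h(\rho'):=e^{-(1-h)}\ell_S(e^{h}\rho')$ seller mass.
\end{itemize}
Both rules are admissible for the section benchmarks. Since $\ell_B$ is nonincreasing and $e^{1-h}\sigma'\ge\sigma'$, we get $x_h(\sigma')\le e^{-h}\ell_B(\sigma')<\ell_B(\sigma')$ whenever the section is nonempty; the vertical threshold is handled the same way. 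Hence, pointwise in $\sigma'$, the profit $x_h(Q_B(x_h)-Q_S(\sigma'))$ is at most the section supremum. Writing $H_h$ and $V_h$ for the integrated profits of the two rules, this gives $H_h\le P_S(\cD)$ and $V_h\le P_B(\cD)$. I then consider the mixture that uses the horizontal rule with probability $1-h$ and the vertical rule with probability $h$. It suffices to prove
\[
  \E_h\bigl[(1-h)H_h+hV_h\bigr]=\frac1e\int_\cD\bigl(Q_B-Q_S\bigr),
\]
because the left side is at most $\E[1-h]P_S+\E[h]P_B=\tfrac12(P_S+P_B)$.

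\textbf{Buyer and seller terms.} I will split each profit into a buyer term and a seller term and change variables in each.
\begin{itemize}
  \item \emph{Buyer term, horizontal rule.} Substituting $\sigma=e^{1-h}\sigma'$ turns $\int x_hQ_B(x_h)\dd\sigma'$ into $e^{-1}\int \ell_B(\sigma)\,Q_B(e^{-h}\ell_B(\sigma))\dd\sigma$.
  \item \emph{Buyer term, vertical rule.} Substituting $\rho=e^{h}\rho'$ turns $\int y_hQ_B(\rho')\dd\rho'$ into $e^{-1}\int_\cD Q_B(e^{-h}\rho)\dd\rho\dd\sigma$. Rewriting this integral section by section gives $e^{-1}\int\int_0^{\ell_B(\sigma)}Q_B(e^{-h}\rho)\dd\rho\dd\sigma$.
\end{itemize}
After averaging over $h$, the buyer side therefore reduces to the one-dimensional identity (\cref{lem:exponential-scale}): for every nonincreasing integrable $Q$ and every $\ell\in[0,1]$,
\[
  \int_0^1\Bigl[(1-h)\,\ell\, Q(e^{-h}\ell)+h\int_0^\ell Q(e^{-h}\rho)\dd\rho\Bigr]\dd h=\int_0^\ell Q(\rho)\dd\rho .
\]
Integrating this over $\sigma$ gives exactly $e^{-1}$ times the buyer-value integral over $\cD$.

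The seller side is symmetric, with $k=1-h$ playing the role of $h$. The vertical rule, with weight $h=1-k$, pays the threshold cost $Q_S(y_h)$. The horizontal rule, with weight $k$, incurs the true costs $Q_S(\sigma')$ over the vertical sections. The same identity, applied to the nondecreasing $Q_S$ (or to $-Q_S$ with the roles of the axes swapped), yields $e^{-1}$ times the cost integral. Subtracting the two identities gives the displayed equality.

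\textbf{Main obstacle: the identity.} I would prove it by linearity. A nonincreasing $Q$ is a constant plus a positive mixture of step functions $\1\{\rho<a\}$, and for constants the identity is trivial. For a step function with $a\ge\ell$ both sides equal $\ell$. For $a<\ell$, put $h_0=\ln(\ell/a)$.
\begin{itemize}
  \item For $h>h_0$ the integrand is $(1-h)\ell+h\ell=\ell$.
  \item For $h<h_0$ it equals $h\,a e^{h}$.
\end{itemize}
If $h_0\le1$, the integral is $\int_0^{h_0}ha e^h\dd h+(1-h_0)\ell=(h_0-1)\ell+a+(1-h_0)\ell=a$. If $h_0>1$, it is $\int_0^1 ha e^h\dd h=a$. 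In both cases it matches $\int_0^\ell\1\{\rho<a\}\dd\rho=a$.

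Two technical points remain. First, for quantiles with atoms, the thresholds are implemented with randomized boundary types as in Appendix~\ref{app:quantile}, so the profit at threshold $x$ is exactly $x(Q_B(x)-Q_S)$. Second, the interchanges of integrals are justified by Fubini, using integrability of $Q_B,Q_S$ and the nonnegativity of GFT on $\cD$.
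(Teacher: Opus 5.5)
Your proposal is correct and follows the paper's proof essentially step for step. It uses the same contraction by $(e^{-h},e^{-(1-h)})$, the same horizontal and vertical thresholds mixed with weights $1-h$ and $h$, and the same split into buyer-value and seller-cost terms, each reduced to the identity of \cref{lem:exponential-scale} (your version is that lemma after substituting $u=e^{-h}\rho$). The only difference is how you verify the identity: you decompose a monotone $Q$ into step functions, which for the nondecreasing $Q_S$ needs $\1\{\sigma>a\}=1-\1\{\sigma\le a\}$; the paper instead differentiates $(1-h)e^{h-1}\int_0^{e^{-h}t}Q(u)\dd u$ in $h$, which requires only integrability of $Q$.
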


Applying \cref{thm:local} to each first-best edge-selection region yields
\begin{equation}\label{eq:after-local}
  \FB
  \le\frac e2
  \sum_{e\in E}\E_{\omega_{-e}}
  \left[P_S(\cD_e^{\omega_{-e}})+P_B(\cD_e^{\omega_{-e}})\right].
\end{equation}
We defer the proof of \cref{thm:local} to \cref{subsec:local-proof}.
The next subsection shows that the two sums of local benchmarks in \eqref{eq:after-local} are bounded by $\OPTS$ and $\OPTB$, completing the proof of \cref{thm:main}.

\subsection{From local bounds to the approximation guarantee}\label{subsec:implementation}

Equation~\eqref{eq:after-local} has reduced the market theorem to an implementation question.
The horizontal benchmark is defined edge by edge and seller-section by seller-section; a priori, running all of those posted prices at once could violate matching or feasibility constraints.
The stable-partner property in \cref{cor:stable-partner} prevents this conflict.

\begin{proposition}[Realizing the section benchmarks]\label{prop:realize-benchmarks}
The local benchmarks satisfy
\begin{equation*}
  \sum_{e\in E}\E_{\omega_{-e}}[P_S(\cD_e^{\omega_{-e}})]\le\OPTS,
  \qquad
  \sum_{e\in E}\E_{\omega_{-e}}[P_B(\cD_e^{\omega_{-e}})]\le\OPTB.
\end{equation*}
\end{proposition}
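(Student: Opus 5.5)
We prove the seller-side inequality; the buyer-side one follows by exchanging the roles of buyers and sellers. The plan is to build, for each realized seller-cost vector $\vct c$, a single feasible buyer auction that is DSIC and ex-post IR for the buyers, and whose expected seller-side profit is at least (up to $\epsilon$) the sum of the local benchmarks. Taking the supremum then gives $\OPTS$.

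\textbf{Step 1: choose measurable near-optimal thresholds.} For each edge $e=(i,j)$, external ranks $\omega_{-e}$, and seller rank $\sigma$, pick a threshold $x_e(\omega_{-e},\sigma)\in[0,\ell_B(\sigma))$. It should attain the supremum in the definition of $P_S$ up to $\epsilon$. Measurability can be arranged by restricting to a countable dense set of candidate $x$ (with the randomized boundary convention of Appendix~\ref{app:quantile} at atoms) and taking the first candidate that is $\epsilon$-optimal.

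\textbf{Step 2: define the auction.} Fix $\vct c$ and all buyer reports. For buyer $i$, consider the first-best matching obtained as her report varies while the others stay fixed. By \cref{cor:stable-partner}, every report at which she is matched gives her the same partner $j$. Only this edge $e=(i,j)$ can make her an offer. Its data are the ranks $\omega_{-e}$ (other buyers' reports and other sellers' costs) and $\sigma_j$ from $c_j$, none of which depend on buyer $i$'s report. The auction offers buyer $i$ the posted price $Q_i(x_e)$, with the boundary randomization at atoms, so she accepts exactly when her rank is below $x_e$. Because the price does not depend on her own report, truthful acceptance is DSIC and ex-post IR.

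\textbf{Step 3: check feasibility.} An accepted offer has $\rho_i<x_e<\ell_B(\sigma_j)$. Hence $(\rho_i,\sigma_j)\in\cD_e^{\omega_{-e}}$, up to null boundary issues handled by the down-set property. So $e\in M^*(\vct t)$ at the realized profile. All accepted edges therefore lie in the single matching $M^*(\vct t)\in\cF$, and downward closedness makes their union feasible. Two accepted offers never share a seller, since both edges belong to one matching.

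\textbf{Step 4: compute the profit.} The profit decomposes edge by edge. For each edge, condition on $\omega_{-e}$ and $\sigma_j$, and average over $\rho_i$. The contribution is $x_e\bigl(Q_i(x_e)-Q_j(\sigma_j)\bigr)$, which is within $\epsilon$ of the section supremum. Integrating over $\sigma_j$ and $\omega_{-e}$ and summing over edges gives $\sum_e\E[P_S]-|E|\epsilon\le\OPTS$. Letting $\epsilon\to0$ completes the seller-side bound.

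\textbf{Main obstacle.} The delicate part is Step 3 combined with the definition in Step 2. The partner $j$ and the region $\cD_e^{\omega_{-e}}$ must depend only on others' reports, and accepting inside the section must really force $e\in M^*$. The case $x_e$ close to $\ell_B$ is safe because the condition $x<\ell_B$ is strict. A second point is that if buyer $i$ is never matched at any report, she gets no offer. This is consistent because in that case every section of $\cD_e$ is empty and the edge contributes zero.
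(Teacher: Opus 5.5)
Your proposal is correct and follows essentially the same route as the paper's proof. Stable partner gives each buyer a single threshold that does not depend on her own report; acceptance strictly inside the section forces $e\in M^*(\vct t)$, so downward closedness gives feasibility; and per-edge near-optimal thresholds recover the benchmark up to $\varepsilon$. The one difference is that you index thresholds by ranks $(\omega_{-e},\sigma_j)$ rather than by the reports $\vct t_{-i}$. This is harmless, because the section length depends on $\sigma_j$ only through $c_j=Q_j(\sigma_j)$. Also, your remark that ``every section'' of $\cD_e^{\omega_{-e}}$ is empty should read ``the section at the realized $\sigma_j$''.
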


\begin{proof}
We prove the first inequality; the second is symmetric.
If $E=\varnothing$, there is nothing to prove.
Fix $\varepsilon>0$ and set $\eta:=\varepsilon/|E|$.

We first choose a threshold for each potential edge.
Fix an edge $e=(i,j)$ and fix all reports except buyer $i$'s report; denote this fixed profile by $\vct t_{-i}$.
Let
\begin{equation*}
  \ell_{B,e}(\vct t_{-i})
  :=\sup\{\rho\in(0,1):e\in M^*(Q_i(\rho),\vct t_{-i})\},
\end{equation*}
with value zero if the set is empty.
By \cref{prop:edge-region}, this is the length of an initial buyer-quantile interval.
If $\ell_{B,e}(\vct t_{-i})=0$, set $x_e(\vct t_{-i})=0$.
Otherwise choose a measurable threshold $x_e(\vct t_{-i})<\ell_{B,e}(\vct t_{-i})$ such that
\begin{equation}\label{eq:near-opt-threshold}
  x_e\bigl(Q_i(x_e)-c_j\bigr)
  \ge
  \sup_{0\le x<\ell_{B,e}(\vct t_{-i})}
  x\bigl(Q_i(x)-c_j\bigr)-\eta.
\end{equation}
Appendix~\ref{app:measurable-posted} gives a measurable choice.
Offer buyer $i$ the threshold price corresponding to the top $x_e$ quantile mass on edge $e$.

For fixed $\vct t_{-i}$, \cref{cor:stable-partner} implies that at most one edge incident to buyer $i$ has $\ell_{B,e}(\vct t_{-i})>0$.
Hence these rules give a single threshold rule for buyer $i$.
In particular, the threshold is independent of buyer $i$'s report, so the usual threshold-payment argument gives DSIC and ex-post IR for the buyers.

We next verify that the accepted edges form a feasible matching.
Suppose buyer $i$ accepts the offer on edge $e$.
Away from a boundary atom, her realized quantile satisfies
\[
  \rho_i<x_e(\vct t_{-i})<\ell_{B,e}(\vct t_{-i}),
\]
so $e$ belongs to the tie-broken first-best matching at the realized profile.
At a boundary atom, the same conclusion follows from \cref{lem:atoms-not-split}.
Thus every accepted edge belongs to the same matching $M^*(\vct t)$.
The set of accepted edges is a subset of that matching and is therefore feasible by downward closedness.

It remains to bound the expected seller-side profit.
Conditional on $\omega_{-e}$ and seller $j$'s quantile $\sigma$, edge $e$ contributes
\[
  x_e\bigl(Q_i(x_e)-Q_j(\sigma)\bigr)
\]
to expected seller-side profit.
By \eqref{eq:near-opt-threshold}, this is within $\eta$ of the corresponding horizontal-section benchmark.
Integrating over $\sigma$ and $\omega_{-e}$ and then summing over edges gives a feasible buyer auction with expected objective at least
\[
  \sum_{e\in E}\E_{\omega_{-e}}[P_S(\cD_e^{\omega_{-e}})]-\varepsilon.
\]
Since $\varepsilon$ is arbitrary, the first inequality of the proposition follows.

For the second inequality, fix all reports except one seller's report, choose a near-optimal bottom-quantile threshold on the corresponding vertical section, and repeat the same argument.
Stable partner gives at most one active procurement offer per seller, and every accepted edge is again contained in the realized first-best matching.
\end{proof}

\begin{proof}[Proof of \cref{thm:main}]
Combining \eqref{eq:after-local} with \cref{prop:realize-benchmarks} gives
\[
  \FB\le\frac e2(\OPTS+\OPTB).
\]
By \cref{prop:gsom-gbom}, this is exactly \eqref{eq:main-profit}.
By \cref{lem:gft-profit},
\begin{align*}
  \E[\GFT(\mathcal M)]
  &=\frac12\E[\GFT(\GSOM)]+\frac12\E[\GFT(\GBOM)]\\
  &\ge\frac12\Pi_S(\GSOM)+\frac12\Pi_B(\GBOM)\\
  &\ge\frac1e\FB.
\end{align*}
The random choice defining $\mathcal M$ is independent of the reports.
Since both components are DSIC, ex-post IR, and ex-ante WBB, $\mathcal M$ inherits the same properties.
\end{proof}

\begin{remark}[Where the assumptions are used]\label{rem:assumptions}
Independence gives the product-quantile representation and permits conditioning on one edge while leaving its two endpoint quantiles independent.
The matching constraint gives the stable-partner property.
Downward closedness is used twice: a first-best matching cannot contain an edge that contributes negative GFT, and any subset of a feasible first-best matching remains feasible in the posted-price construction.
No matroid or exchange property is used.
\end{remark}

\subsection{Proof of the local bound on GFT}\label{subsec:local-proof}

Fix $Q_B,Q_S$, and $\cD$ as in \cref{thm:local}.
On a randomly contracted copy of $\cD$, we choose a posted price for each horizontal section and for each vertical section.
We evaluate the buyer and seller contributions separately to compare the expected profit of these prices with the GFT on $\cD$.

To construct the price rules, draw $h\sim\operatorname{Unif}(0,1)$ independently of the endpoint ranks, and set
\begin{equation*}
  \theta:=e^{-h},\qquad \eta:=e^{-(1-h)}.
\end{equation*}
For each realization of $h$, define
\[
  \cD_h:=\{(\theta\rho,\eta\sigma):(\rho,\sigma)\in\cD\}.
\]
As $h$ increases, $\theta$ decreases and $\eta$ increases, while $\theta\eta=e^{-1}$.
Since $0<\theta,\eta<1$ and $\cD$ is a down-set, we have $\cD_h\subseteq\cD$.

Consider two posted-price rules on $\cD_h$.
On a horizontal section at seller rank $\sigma\in(0,\eta)$, serve the top $x_h(\sigma)$ buyer mass at price $Q_B(x_h(\sigma))$; on a vertical section at buyer rank $\rho\in(0,\theta)$, procure from the bottom $y_h(\rho)$ seller mass at price $Q_S(y_h(\rho))$, where
\[
  x_h(\sigma):=\theta\ell_B(\sigma/\eta),
  \qquad
  y_h(\rho):=\eta\ell_S(\rho/\theta).
\]
Each rule makes no offer outside the indicated range or on an empty section.
Products involving a zero threshold are interpreted as zero, so no quantile is evaluated at zero.
Both rules trade on $\cD_h$ in quantile space, up to null boundaries, with boundary randomization at atoms as in Appendix~\ref{app:quantile}.

The thresholds are admissible for the original sections.
Indeed, whenever $\ell_B(\sigma/\eta)>0$, monotonicity of $\ell_B$ gives
\[
  0<x_h(\sigma)<\ell_B(\sigma/\eta)\le\ell_B(\sigma),
\]
and the vertical statement is symmetric.
Consequently, their one-sided profits satisfy
\begin{align}
  \pi_S(h)
  &:=\int_0^{\eta}x_h(\sigma)
       \bigl(Q_B(x_h(\sigma))-Q_S(\sigma)\bigr)\dd\sigma
    \le P_S(\cD),
    \label{eq:contracted-seller-profit}\\
  \pi_B(h)
  &:=\int_0^{\theta}y_h(\rho)
       \bigl(Q_B(\rho)-Q_S(y_h(\rho))\bigr)\dd\rho
    \le P_B(\cD).
    \label{eq:contracted-buyer-profit}
\end{align}

Conditional on $h$, choose the horizontal rule with probability $1-h$ and the vertical rule with probability $h$.
We will show that the resulting expected local one-sided profit satisfies the exact identity
\begin{equation}\label{eq:random-price-identity}
  \E_h\bigl[(1-h)\pi_S(h)+h\pi_B(h)\bigr]
  =\frac1e\int_\cD
      \bigl(Q_B(\rho)-Q_S(\sigma)\bigr)\dd\rho\dd\sigma.
\end{equation}
The desired bound then follows from \eqref{eq:contracted-seller-profit} and \eqref{eq:contracted-buyer-profit}, since $\E[h]=\E[1-h]=1/2$.

To evaluate the expected profit of these price rules, we first prove the following integral identity.

\begin{lemma}[Exponential scale identity]\label{lem:exponential-scale}
Let $Q:(0,1)\to[0,\infty)$ be integrable.
For every $t\in[0,1]$,
\begin{equation}\label{eq:exponential-scale-identity}
  \E_h\!\left[
    \eta\left(h\int_0^{\theta t}Q(u)\dd u+(1-h)\theta tQ(\theta t)\right)
  \right]
  =\frac1e\int_0^tQ(u)\dd u,
\end{equation}
where the integrand is defined to be zero when $t=0$.
\end{lemma}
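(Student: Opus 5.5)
The plan is to reduce \eqref{eq:exponential-scale-identity} to a one-variable integral in the contracted buyer rank, and then to recognize the integrand as an exact derivative.

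First dispose of $t=0$: both sides vanish by convention. Fix $t\in(0,1]$ and write $I(s):=\int_0^sQ(u)\dd u$. Since $Q$ is integrable, $I$ is absolutely continuous on $[0,1]$, with $I'=Q$ almost everywhere. Because $\theta\eta=e^{-1}$, we have $\eta=e^{-1}/\theta$, so the left-hand side equals
\[
  \frac1e\,\E_h\!\left[\frac{1}{\theta}\Bigl(h\,I(\theta t)+(1-h)\,\theta t\,Q(\theta t)\Bigr)\right].
\]
I then change variables $s:=\theta t=te^{-h}$. This is a smooth decreasing bijection from $h\in(0,1)$ onto $s\in(t/e,t)$, with $h=L(s):=\ln(t/s)$, $\dd h=-\dd s/s$ and $1/\theta=t/s$. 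The expectation becomes
\[
  \int_{t/e}^{t}\Bigl(\frac{t}{s^2}\,L(s)\,I(s)+\frac{t}{s}\bigl(1-L(s)\bigr)Q(s)\Bigr)\dd s .
\]
On this compact interval, away from $0$, every weight is bounded and smooth, so the integrand is integrable.

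The key step is to identify an antiderivative. Take
\[
  A(s):=\frac{t}{s}\bigl(1-L(s)\bigr)I(s).
\]
Since $L'(s)=-1/s$, we get $\frac{\dd}{\dd s}\bigl[\frac ts(1-L(s))\bigr]=-\frac{t}{s^2}(1-L)+\frac{t}{s^2}=\frac{t}{s^2}L(s)$. The product rule for absolutely continuous functions on $[t/e,t]$ then gives $A'(s)=\frac{t}{s^2}L(s)I(s)+\frac ts(1-L(s))Q(s)$ almost everywhere, which is exactly the integrand. By the fundamental theorem of calculus for absolutely continuous functions, the integral equals $A(t)-A(t/e)$.

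At $s=t$ we have $L=0$, so $A(t)=I(t)$. At $s=t/e$ we have $L=1$, so $A(t/e)=0$. Multiplying by the prefactor $1/e$ yields $\frac1e\int_0^tQ(u)\dd u$, as claimed.

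The only genuine obstacle is guessing the antiderivative $A$; the rest is bookkeeping. If the guess were unavailable, I would fall back on splitting the expectation into its two terms. I would apply Fubini to $\int_{t/e}^t \frac{t}{s^2}L(s)\int_0^sQ(u)\dd u\dd s$, which converts the first term into an integral of $Q(u)$ against an explicit kernel, and then check that this kernel plus the second term's weight collapses to $\1\{u<t\}$. The technical points to verify are only the measure-theoretic ones: that the product rule and the fundamental theorem apply to $I$, which is absolutely continuous but not $C^1$, and that the substitution is legitimate for a merely integrable $Q$. Both hold because the Jacobian is smooth and bounded on $[t/e,t]$.
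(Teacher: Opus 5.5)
Your proof is correct and is essentially the paper's argument. Your potential $A(s)=\frac ts\bigl(1-L(s)\bigr)I(s)$, rewritten via $s=te^{-h}$, equals $e\cdot(1-h)e^{h-1}\int_0^{e^{-h}t}Q(u)\dd u$, which is exactly the function the paper differentiates in $h$ before integrating over $h\in(0,1)$; the substitution $s=\theta t$ only changes the bookkeeping.
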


\begin{proof}
Recall that $\theta=e^{-h}$ and $\eta=e^{h-1}$.
The case $t=0$ is immediate.
For $t>0$, the map $h\mapsto e^{-h}t$ is smooth and bi-Lipschitz from $[0,1]$ onto $[t/e,t]$.
Since $Q$ is integrable, the function $h\mapsto(1-h)e^{h-1}\int_0^{e^{-h}t}Q(u)\dd u$ is absolutely continuous.
By the product and chain rules,
\[
\begin{aligned}
  &\frac{\dd}{\dd h}\left[(1-h)e^{h-1}\int_0^{e^{-h}t}Q(u)\dd u\right]\\
  &\quad=-h e^{h-1}\int_0^{e^{-h}t}Q(u)\dd u
       -(1-h)e^{h-1}e^{-h}tQ(e^{-h}t)\\
  &\quad=-e^{h-1}\left[h\int_0^{e^{-h}t}Q(u)\dd u
       +(1-h)e^{-h}tQ(e^{-h}t)\right]
\end{aligned}
\]
for almost every $h\in(0,1)$.
Since $h\sim\operatorname{Unif}(0,1)$, integrating the preceding identity gives
\begin{align*}
  &\E_h\!\left[
    \eta\left(h\int_0^{\theta t}Q(u)\dd u
    +(1-h)\theta tQ(\theta t)\right)
  \right]\\
  &\quad=-\left[
    (1-h)e^{h-1}\int_0^{e^{-h}t}Q(u)\dd u
  \right]_{h=0}^{h=1}
  =\frac1e\int_0^tQ(u)\dd u.
\end{align*}
This proves \eqref{eq:exponential-scale-identity}.
\end{proof}

\begin{proof}[Proof of \cref{thm:local}]
We prove \eqref{eq:random-price-identity} by separating the terms involving $Q_B$ from those involving $Q_S$ in $(1-h)\pi_S(h)+h\pi_B(h)$.
Both price rules trade on $\cD_h$.
Using horizontal sections of $\cD_h$ and $x_h(\eta\sigma)=\theta\ell_B(\sigma)$, we can write the buyer-value terms as
\begin{align*}
  &(1-h)\int_0^{\eta}x_h(\sigma)Q_B(x_h(\sigma))\dd\sigma
    +h\int_{\cD_h}Q_B(\rho)\dd\rho\dd\sigma\\
  &\quad=\eta\int_0^1
    \Bigl[(1-h)\theta\ell_B(\sigma)Q_B(\theta\ell_B(\sigma))
      +h\int_0^{\theta\ell_B(\sigma)}Q_B(\rho)\dd\rho\Bigr]\dd\sigma.
\end{align*}
Applying \cref{lem:exponential-scale} with $Q=Q_B$ and $t=\ell_B(\sigma)$, together with Tonelli's theorem, shows that the expectation of this expression is
\[
  \frac1e\int_0^1\int_0^{\ell_B(\sigma)}Q_B(\rho)\dd\rho\dd\sigma
  =\frac1e\int_\cD Q_B(\rho)\dd\rho\dd\sigma.
\]

Similarly, using vertical sections of $\cD_h$ and $y_h(\theta\rho)=\eta\ell_S(\rho)$, we can write the seller-cost terms as
\begin{align*}
  &(1-h)\int_{\cD_h}Q_S(\sigma)\dd\rho\dd\sigma
    +h\int_0^{\theta}y_h(\rho)Q_S(y_h(\rho))\dd\rho\\
  &\quad=\theta\int_0^1
    \Bigl[(1-h)\int_0^{\eta\ell_S(\rho)}Q_S(\sigma)\dd\sigma
      +h\eta\ell_S(\rho)Q_S(\eta\ell_S(\rho))\Bigr]\dd\rho.
\end{align*}
Replacing $h$ by $1-h$ preserves its distribution and interchanges $\theta$ and $\eta$.
Applying \cref{lem:exponential-scale} with $Q=Q_S$ and $t=\ell_S(\rho)$, together with Tonelli's theorem, therefore shows that the expectation of the seller-cost terms is
\[
  \frac1e\int_0^1\int_0^{\ell_S(\rho)}Q_S(\sigma)\dd\sigma\dd\rho
  =\frac1e\int_\cD Q_S(\sigma)\dd\rho\dd\sigma.
\]
Both expectations are finite because $Q_B$ and $Q_S$ are integrable.
The weighted profit is the buyer-value terms minus the seller-cost terms, so subtracting these expectations proves \eqref{eq:random-price-identity}.
Finally, \eqref{eq:contracted-seller-profit} and \eqref{eq:contracted-buyer-profit} imply
\[
  \frac1e\int_\cD
    \bigl(Q_B(\rho)-Q_S(\sigma)\bigr)\dd\rho\dd\sigma
  \le\E_h\bigl[(1-h)P_S(\cD)+hP_B(\cD)\bigr]
  =\frac12\bigl(P_S(\cD)+P_B(\cD)\bigr),
\]
which proves \eqref{eq:local-ineq}.
\end{proof}

\section{Tight \texorpdfstring{$e$}{e}-Approximation Upper Bound}\label{sec:tight}

We now show that the $e$-approximation for GRO cannot be improved over downward-closed matching markets.
Our goal is to construct markets in which GRO's expected GFT approaches a $1/e$ fraction of first-best GFT. The key challenge is to turn tightness of the local profit inequality into tightness for the mechanism's actual GFT.
We achieve this using regular type distributions supported on $[0,1]$, pairwise disjoint trading edges, and a single knapsack constraint.

A buyer distribution is \emph{regular} if its revenue curve $R$ is concave after linear interpolation across the quantile intervals corresponding to point masses; a seller distribution is regular if its similarly interpolated cost curve $K$ is convex.
Appendix~\ref{app:quantile} specifies the interpolation convention.
In our construction, $R$ is already concave and $K$ is already convex, so interpolation leaves both curves unchanged.

\begin{theorem}[Tightness]\label{thm:tightness}
For every $\varepsilon>0$, there is a finite market satisfying the assumptions of \cref{thm:main} such that
\begin{enumerate}[label=(\roman*)]
  \item every type distribution is regular and supported on $[0,1]$;
  \item all trading edges have pairwise disjoint endpoints;
  \item feasibility is described by a single knapsack constraint; and
  \item for $\mathcal M$,
  \begin{equation*}
    \frac{\FB}{\E[\GFT(\mathcal M)]}\ge e-\varepsilon,
  \end{equation*}
  and
  \begin{equation*}
    \frac{\FB}
    {\frac12\left(\Pi_S(\GSOM)+\Pi_B(\GBOM)\right)}
    \ge e-\varepsilon.
  \end{equation*}
\end{enumerate}
\end{theorem}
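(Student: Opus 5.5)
The plan is to build the instance in three layers: a bilateral pair for which the local inequality of \cref{thm:local} is nearly tight, many independent copies of that pair, and one deterministic edge tied to the copies by a knapsack constraint. The knapsack is what forces both components of $\mathcal M$ to give up the random trades.

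\textbf{Step 1: a nearly tight bilateral pair.} I will take one buyer with a capped exponential value distribution and one seller whose cost distribution is its reflection. Concretely, in quantile form, $Q_B(\rho)=\min\{1,\,a+\lambda\ln(1/\rho)\}$ suitably normalized to $[0,1]$, and $Q_S(\sigma)=1-Q_B(\sigma)$. For a single edge with no competition, the selection region $\cD$ is $\{Q_B(\rho)\ge Q_S(\sigma)\}$. Since the seller-side benchmark is a single posted price, $\OPTS$ equals $P_S(\cD)$, and by symmetry $\OPTB=P_B(\cD)=P_S(\cD)$.

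I will choose the two parameters so that, for almost every seller rank, the profit function $x(Q_B(x)-Q_S(\sigma))$ is nearly flat over most of the section. This is the exponential analogue of equal revenue, and it is the regime in which the contraction argument of \cref{subsec:local-proof} loses nothing: every $h$ yields a near-optimal threshold. I then compute $G:=\int_\cD(Q_B-Q_S)$ and $P:=P_S(\cD)$ in closed form and verify that $G/P\to e$ in the chosen limit, for example as the cap parameter tends to infinity after rescaling into $[0,1]$.

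I will also check regularity. The revenue curve $R(\rho)=\rho Q_B(\rho)$ is concave for this $Q_B$, and the reflection makes $K$ convex. This step is the \textbf{main obstacle}: the explicit one-parameter family has to make the local inequality asymptotically an equality, and the algebra must close. If the plain capped exponential does not give exactly $e$ in the limit, my first fallback is to add a point mass at the cap, chosen so that the extreme threshold also earns the common flat profit.

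\textbf{Step 2: amplification.} I take $n$ disjoint copies of this pair, each with weight $1$, and one extra edge with a deterministic buyer value $w$, deterministic seller cost $0$, and weight $n$. The knapsack capacity is $n$, so feasible sets are either any subset of the random edges or the deterministic edge alone. Because the distributions are regular, no ironing is needed. GSOM's score for the random side is then $\sum_k(\phi(v_k)-c_k)_+$, whose mean is $nP$ (it equals $\OPTS$ of a single copy), and GBOM's score has the same mean by symmetry. First best compares $\sum_k(v_k-c_k)_+$, with mean $nG$, against $w$.

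I set $w=(1+\delta)nP$ with $nP<w<nG$. All scores are bounded in $[-1,1]$ per copy, so Hoeffding's inequality shows that, with probability $1-o(1)$, first best takes the random edges while both GSOM and GBOM take the deterministic edge. Hence $\FB\ge(1-o(1))nG$. On the other side, $\E[\GFT(\mathcal M)]\le w+o(n)$, because on the failure event GFT is at most $n$ times a vanishing probability.

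\textbf{Step 3: conclusion.} The ratio $\FB/\E[\GFT(\mathcal M)]$ is at least $(G/P)/(1+\delta)-o(1)$. Letting $n\to\infty$, then $\delta\to0$, and then taking the bilateral limit of Step 1 gives $e-\varepsilon$. The second ratio in (iv) follows immediately, since \cref{lem:gft-profit} gives $\Pi_S(\GSOM)\le\E[\GFT(\GSOM)]$ and $\Pi_B(\GBOM)\le\E[\GFT(\GBOM)]$. Properties (i)–(iii) hold by construction; the deterministic types are trivially regular and lie in $[0,1]$ after scaling $w$ by $1/n$ relative to the weights, or equivalently after normalizing all values.
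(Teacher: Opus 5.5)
Your construction is the paper's. With $a=0$ and $\lambda=1/L$, your $Q_B$ and $Q_S=1-Q_B$ are the capped exponential $v=\min\{X,L\}$ and mirrored cost $c=L-\min\{Y,L\}$, divided by $L$. The $\min\{1,\cdot\}$ already creates the atom of mass $e^{-L}$ at the cap, so your fallback is built in. Steps~2--3 are the paper's knapsack amplification, with your $\delta$ in the role of the paper's $1/L$. Deriving the profit ratio in (iv) from the GFT ratio via \cref{lem:gft-profit} is valid, and shorter than the paper's direct computation of the one-sided profit limits.

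The gap is Step~1. It is the only place the constant $e$ enters, and you promise the computation without doing it. The mechanism you give for it is also wrong, so tuning parameters for ``flatness'' would not lead you to the result. Work in natural units, with a seller of cost $c<L-1$:
\begin{itemize}
  \item The section length is $\ell_B=e^{-c}$.
  \item The profit $x(\log(1/x)-c)$ is not flat. It vanishes at both ends of the section and has a strict interior maximum $e^{-1-c}$ at $x=e^{-1-c}=e^{-1}\ell_B$.
  \item The GFT on that section is $\int_0^{e^{-c}}\bigl(\min\{L,\log(1/\rho)\}-c\bigr)\dd\rho=e^{-c}-e^{-L}$.
\end{itemize}
So the ratio is essentially $e$ section by section. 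This is also why the contraction in \cref{subsec:local-proof} loses little here, though not for the reason you state. Off the cap strips, the efficient region is the hyperbola $\rho\sigma\le e^{-L}$. Hence, for every $h$ and every uncapped section $\sigma<\eta$, the contracted threshold is $\theta\ell_B(\sigma/\eta)=\theta\eta\,\ell_B(\sigma)=e^{-1}\ell_B(\sigma)$, which is exactly the optimal threshold.

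Integrating over $\sigma$, including the cap strips, gives $g_L=Le^{-L}$ and $p_L^S=p_L^B=e^{-L}(1+(L-2)/e)$. Hence $g_L/p_L^S=eL/(L+e-2)\to e$. Once you supply this computation, the rest of your argument goes through.
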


The proof of \cref{thm:tightness} has two layers.
We first construct a bounded regular bilateral instance for which the local inequality is asymptotically tight.
This controls the one-sided profit benchmarks.
We then add a deterministic alternative under one knapsack constraint so that both components of GRO exhibit the same asymptotic gap in actual GFT.

\subsection{A bilateral building block}\label{subsec:bilateral-tight}

Fix $L\ge2$, let $p=e^{-L}$, and let $X,Y$ be independent $\operatorname{Exp}(1)$ random variables.
Define
\begin{equation*}
  v:=\min\{X,L\},
  \qquad
  c:=L-\min\{Y,L\}.
\end{equation*}
The buyer distribution is exponential on $[0,L)$ with an atom of mass $p$ at $L$; the seller distribution is its mirror image, with an atom of mass $p$ at zero.
The exponential form is useful because, away from the caps, the buyer virtual value is exactly one below value and the seller virtual cost exactly one above cost.
The caps keep the support bounded while preserving regularity.

\begin{lemma}[Capped-exponential calculations]\label{lem:capped-exp}
The buyer and seller distributions defined above are regular.
Their ironed virtual value and virtual cost admit the type-space representatives
\begin{equation}\label{eq:virtual-capped}
  \phi(v)=
  \begin{cases}
    v-1,&v<L,\\
    L,&v=L,
  \end{cases}
  \qquad
  \psi(c)=
  \begin{cases}
    0,&c=0,\\
    c+1,&c>0.
  \end{cases}
\end{equation}
They satisfy
\begin{equation}\label{eq:mirror-virtual}
  \psi(L-z)=L-\phi(z),\qquad 0\le z\le L.
\end{equation}
Moreover, with
\begin{align*}
  g_L&:=\E[\pos{v-c}],\\
  p_L^S&:=\E[\pos{\phi(v)-c}],\\
  p_L^B&:=\E[\pos{v-\psi(c)}],
\end{align*}
we have
\begin{equation}\label{eq:capped-values}
  g_L=Le^{-L},
  \qquad
  p_L^S=p_L^B=e^{-L}\left(1+\frac{L-2}{e}\right).
\end{equation}
\end{lemma}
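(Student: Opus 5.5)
The plan is to compute everything explicitly in quantile space, using the coupling $v=Q_B(\rho)$, $c=Q_S(\sigma)$ from \cref{subsec:model-notation}.

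\textbf{Regularity and virtual functions.} For the capped exponential, $Q_B(\rho)=L$ for $\rho<p$ and $Q_B(\rho)=-\ln\rho$ for $\rho\ge p$. Hence
\[
  R(\rho)=L\rho \ \text{on } [0,p],\qquad R(\rho)=-\rho\ln\rho \ \text{on } [p,1].
\]
\begin{itemize}
  \item The second piece is concave, since $R''=-1/\rho<0$.
  \item At $\rho=p$ the left slope is $L$ and the right slope is $-\ln p-1=L-1<L$. So $R$ is concave and no ironing is needed.
  \item Its left derivative is $L$ on $(0,p]$ and $-\ln\rho-1$ on $(p,1)$.
\end{itemize}
Translating the slopes to type space via the definition of $\overline\phi$ gives $\phi(L)=L$ and $\phi(v)=v-1$ for $v<L$.

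The seller side mirrors this. We have $Q_S(\sigma)=0$ on $(0,p)$ and $Q_S(\sigma)=L+\ln\sigma$ otherwise, so
\[
  K(\sigma)=0 \ \text{on } [0,p],\qquad K(\sigma)=\sigma(L+\ln\sigma)\ \text{on } [p,1].
\]
\begin{itemize}
  \item The second piece is convex, since $K''=1/\sigma>0$.
  \item At $\sigma=p$ the slope jumps up from $0$ to $1$, so $K$ is convex.
  \item Its slope is $L+\ln\sigma+1=c+1$ for $c>0$, and $0$ at the atom $c=0$.
\end{itemize}
The mirror identity \eqref{eq:mirror-virtual} then follows by checking the two cases. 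For $z<L$ we have $L-z>0$, so $\psi(L-z)=L-z+1=L-\phi(z)$. For $z=L$, $\psi(0)=0=L-\phi(L)$.

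\textbf{The three expectations.} I would use the tail-integral formula $\E[(A-c)_+]=\E\bigl[\int_c^{\infty}\Prb(A>t)\dd t\bigr]$, with $A$ independent of $c$ and $c\ge 0$.

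For $g_L$:
\begin{itemize}
  \item $\Prb(v>t)=e^{-t}$ for $0\le t<L$, and $v\le L$.
  \item So $\E[(v-c)_+]=\E\bigl[e^{-c}-e^{-L}\bigr]$.
  \item Write $e^{-c}=e^{-L}e^{\min\{Y,L\}}$. Then $\E[e^{\min\{Y,L\}}]=\int_0^L 1\dd y+e^{L}e^{-L}=L+1$.
  \item This gives $g_L=e^{-L}(L+1)-e^{-L}=Le^{-L}$.
\end{itemize}

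For $p_L^S$, first compute the tail of $\phi(v)$:
\[
  \Prb(\phi(v)>t)=e^{-t-1}\ \text{for } 0\le t\le L-1,\qquad \Prb(\phi(v)>t)=e^{-L}\ \text{for } L-1<t<L.
\]
Here the atom at $L$ exactly compensates the loss of the $v<L$ mass. The inner integral $f(c)=\int_c^L\Prb(\phi(v)>t)\dd t$ is then
\[
  f(c)=e^{-c-1}\ \text{for } c\le L-1,\qquad f(c)=(L-c)e^{-L}\ \text{for } c>L-1.
\]
Now average $f$ over the cost distribution, which has an atom $e^{-L}$ at $0$ and density $e^{c-L}$ on $(0,L]$. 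The three pieces are:
\begin{itemize}
  \item the atom: $e^{-L-1}$;
  \item the interval $(0,L-1]$: $(L-1)e^{-L-1}$;
  \item the interval $(L-1,L]$: $e^{-L}\int_0^1 se^{-s}\dd s=e^{-L}(1-2/e)$.
\end{itemize}
These sum to $e^{-L}\bigl(1+(L-2)/e\bigr)$.

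For $p_L^B$, the mirror map $z\mapsto L-z$ swaps the laws of $v$ and $c$. Combined with \eqref{eq:mirror-virtual}, this gives
\[
  v-\psi(c)\overset{d}{=}(L-c')-\bigl(L-\phi(v')\bigr)=\phi(v')-c',
\]
where $(v',c')$ has the same joint law as $(v,c)$. Hence $p_L^B=p_L^S$.

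\textbf{Main obstacle.} The only delicate step is bookkeeping at the atoms. Two things must be checked:
\begin{itemize}
  \item the type-space representatives \eqref{eq:virtual-capped} agree with the definitions of $\overline\phi$ and $\overline\psi$ via suprema and infima of the slopes, including reports at the caps;
  \item the atom masses are correctly included in the tail of $\phi(v)$ for $t\in(L-1,L)$.
\end{itemize}
I would verify both directly against the piecewise slopes computed above.
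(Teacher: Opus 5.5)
Your proof is correct and follows the paper's route: the same quantile-space check of concavity of $R$ and convexity of $K$ with slope translation to type space, the same case check of the mirror identity, and the same reflection $(v',c')=(L-c,L-v)$ to get $p_L^B=p_L^S$. The only difference is the order of integration for $g_L$ and $p_L^S$. You condition on the cost and integrate the buyer-side tail $\Prb(\phi(v)>t)$, whereas the paper conditions on the buyer's value and computes $\E_Z[(v+Z-L)_+]$ and $\E_Z[(v-1+Z-L)_+]$ directly; both give the stated values.
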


\begin{proof}
The buyer upper-quantile function and seller lower-quantile function are
\[
  Q_B(\rho)=
  \begin{cases}
    L,&0<\rho\le p,\\
    -\log \rho,&p<\rho<1,
  \end{cases}
  \qquad
  Q_S(\sigma)=
  \begin{cases}
    0,&0<\sigma\le p,\\
    L+\log \sigma,&p<\sigma<1.
  \end{cases}
\]
Hence
\[
  R(\rho)=
  \begin{cases}
    L\rho,&\rho\le p,\\
    -\rho\log \rho,&\rho>p,
  \end{cases}
  \qquad
  K(\sigma)=
  \begin{cases}
    0,&\sigma\le p,\\
    \sigma(L+\log \sigma),&\sigma>p.
  \end{cases}
\]
The slope of $R$ is $L$ on $(0,p)$ and $-\log \rho-1$ on $(p,1)$, with a downward jump at $p$.
Thus $R$ is concave.
The slope of $K$ is zero on $(0,p)$ and $L+\log \sigma+1$ on $(p,1)$, with an upward jump at $p$.
Thus $K$ is convex.
No further ironing is needed, and translating the slopes back to type space gives \eqref{eq:virtual-capped}.
The mirror identity \eqref{eq:mirror-virtual} follows directly.

Let $Z:=\min\{Y,L\}$, so $c=L-Z$ and $v,Z$ are independent with the same law.
For a fixed buyer value $0\le v<L$,
\begin{equation*}
  \E_Z[\pos{v+Z-L}]
  =\int_{L-v}^L(v+z-L)e^{-z}\dd z+pv
  =p(e^v-1).
\end{equation*}
At the atom $v=L$, the conditional expected GFT is $\E[Z]=1-p$.
Therefore
\[
  g_L
  =\int_0^Lp(e^v-1)e^{-v}\dd v+p(1-p)
  =Lp.
\]

For seller-side virtual surplus, the expression $\pos{v-1+Z-L}$ vanishes when $v\le1$.
For a fixed buyer value $1<v<L$,
\begin{equation*}
  \E_Z[\pos{v-1+Z-L}]=p(e^{v-1}-1).
\end{equation*}
At $v=L$, the virtual value is $L$, so the conditional expectation is again $1-p$.
Hence
\[
  p_L^S
  =\int_1^Lp(e^{v-1}-1)e^{-v}\dd v+p(1-p)
  =p\left(1+\frac{L-2}{e}\right).
\]
Finally, \eqref{eq:mirror-virtual} and exchangeability of $v$ and $Z$ give
\[
  p_L^B
  =\E[\pos{v-L+\phi(Z)}]
  =\E[\pos{\phi(v)+Z-L}]
  =p_L^S.
\]
\end{proof}

Let $\cD=\{(\rho,\sigma):Q_B(\rho)\ge Q_S(\sigma)\}$ be the efficient region of this bilateral instance.
Since $R$ is concave and $K$ is convex, the posted-price profit functions $R(x)-cx$ and $vy-K(y)$ are concave for every fixed $c$ and $v$, respectively.
Both functions vanish at zero, so each maximum equals the integral of the positive part of its derivative.
Averaging over the fixed endpoint's type therefore gives the unrestricted benchmarks $p_L^S$ and $p_L^B$.
Restricting thresholds to their sections of $\cD$ leaves these optima unchanged: any threshold beyond its section yields nonpositive profit, and including the section endpoint does not change the supremum by \cref{lem:endpoint-sup}.
Thus
\[
  P_S(\cD)=p_L^S,
  \qquad
  P_B(\cD)=p_L^B.
\]
Consequently,
\begin{equation*}
  \frac{g_L}{p_L^S}
  =\frac{eL}{L+e-2}\longrightarrow e,
  \qquad
  \frac{g_L}{p_L^S+p_L^B}\longrightarrow\frac e2.
\end{equation*}
So the coefficient $e/2$ in the local inequality is already sharp for bounded regular bilateral distributions.

\subsection{A tight example with a knapsack constraint}\label{subsec:knapsack}

The calculation above compares first-best GFT with one-sided \emph{profit}.
It does not yet compare first-best GFT with the actual GFT produced by either component of GRO.
The difference is the utility of the strategic side.
A one-sided auction can have small profit while still generating substantial GFT because buyers or sellers obtain large information rents.

We place many independent copies of the bilateral instance next to one deterministic edge.
The GFT from the deterministic edge is chosen to lie strictly above the expected maximum virtual surplus of the random edges for either component of GRO and strictly below their expected maximum GFT.
A single knapsack constraint then forces first best and the two components of GRO to choose different alternatives with high probability.

Set
\begin{equation*}
  d_L:=\left(1+\frac1L\right)p_L^S.
\end{equation*}
For every $L\ge2$,
\begin{equation*}
  p_L^S=p_L^B<d_L<g_L.
\end{equation*}
Only the last inequality is nontrivial.
By \eqref{eq:capped-values}, it is equivalent to
\[
  eL^2>(L+1)(L+e-2),
\]
and the difference between the two sides is $(e-1)L(L-1)-(e-2)>0$.

Fix $N$.
Take $N$ independent capped-exponential edges $e_1,\ldots,e_N$, all with disjoint endpoints.
Add one disjoint edge $e_0$ with deterministic buyer value $Nd_L$ and deterministic seller cost zero.
Define
\begin{equation*}
  \cF_{L,N}:=2^{\{e_1,\dots,e_N\}}\cup\{\{e_0\}\}.
\end{equation*}
Equivalently, give each random edge weight one, give $e_0$ weight $N$, and impose knapsack capacity $N$.
Thus every subset of the random edges is feasible, while $e_0$ must be chosen alone.

For $i=1,\ldots,N$, write
\begin{equation*}
  G_i:=\pos{v_i-c_i},
  \qquad
  P_i^S:=\pos{\phi(v_i)-c_i},
  \qquad
  P_i^B:=\pos{v_i-\psi(c_i)}.
\end{equation*}
These random variables lie in $[0,L]$ and have means $g_L,p_L^S,p_L^B$.
Among subsets of the random edges, the maximum GFT, seller-side virtual surplus, and buyer-side virtual surplus are $\sum_iG_i$, $\sum_iP_i^S$, and $\sum_iP_i^B$, respectively.
The deterministic edge contributes $Nd_L$ to GFT and to both virtual-surplus objectives.

Write $\FB_{L,N}$ for first-best GFT and $\GSOM_{L,N},\GBOM_{L,N}$ for the two mechanisms in this market.

\begin{proposition}[Asymptotic behavior]\label{prop:asymptotic}
For every fixed $L\ge2$, as $N\to\infty$,
\begin{align*}
  \frac{\FB_{L,N}}N&\longrightarrow g_L,\\
  \frac{\E[\GFT(\GSOM_{L,N})]}N&\longrightarrow d_L,
  \qquad
  \frac{\E[\GFT(\GBOM_{L,N})]}N\longrightarrow d_L,\\
  \frac{\Pi_S(\GSOM_{L,N})}N&\longrightarrow d_L,
  \qquad
  \frac{\Pi_B(\GBOM_{L,N})}N\longrightarrow d_L.
\end{align*}
\end{proposition}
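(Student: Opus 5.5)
The plan is to use the law of large numbers together with bounded convergence. Every per-edge quantity lies in $[0,L]$, so each normalized sum converges in probability and in $L^1$ to its mean. The first step is first best. Since $e_0$ must be chosen alone, the first-best GFT equals
\[
\max\Bigl\{\textstyle\sum_{i=1}^N G_i,\; N d_L\Bigr\}.
\]
Dividing by $N$ gives $\max\{\bar G_N, d_L\}$ with $\bar G_N:=\frac1N\sum_i G_i$. This quantity is bounded by $L$, and $\bar G_N\to g_L$ in probability. Because $d_L<g_L$, bounded convergence yields $\FB_{L,N}/N\to\max\{g_L,d_L\}=g_L$.

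The second step is GSOM. Its virtual-surplus objective over the random edges is $\sum_i P_i^S$, while over $\{e_0\}$ it is $Nd_L$; the virtual value and cost of $e_0$ are its deterministic value and cost. So GSOM selects $e_0$ whenever $\bar P^S_N:=\frac1N\sum_i P_i^S<d_L$, apart from ties, which have vanishing probability. Let $A_N$ be this event. Since $p_L^S<d_L$, the weak law gives $\Prb(A_N)\to1$.

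On $A_N$, the GFT is exactly $Nd_L$. On $A_N^c$, the GFT is at most $\sum_i G_i\le NL$. Hence
\[
\E[\GFT(\GSOM_{L,N})]/N = d_L\,\Prb(A_N)+O\bigl(L\,\Prb(A_N^c)\bigr)\to d_L .
\]
For the profit $\Pi_S$, I will invoke the one-sided optimality in \cref{prop:gsom-gbom}, so that $\Pi_S(\GSOM)$ equals the expected maximized seller-side virtual surplus $\E[\max\{\sum_iP^S_i,Nd_L\}]$. By the same bounded-convergence argument this is $N(\max\{p_L^S,d_L\}+o(1))=N(d_L+o(1))$. One caveat: the threshold payment for the deterministic buyer of $e_0$ is not literally $Nd_L$, so I will rely on the virtual-surplus identity rather than compute payments directly. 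If that identity needs justification at the atoms, I will use the standard Myerson identity $\E[\text{payment}]=\E[\overline\phi\cdot\text{allocation}]$, which holds when the allocation is constant on ironed intervals. Here no ironing is needed because the distributions are regular by \cref{lem:capped-exp}. The deterministic buyer's value is a point mass, so her virtual value equals $Nd_L$ and she pays her full value.

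The third step, for GBOM, is symmetric. It uses $P_i^B$, $p_L^B=p_L^S<d_L$, and the mirror identity \eqref{eq:mirror-virtual}. Again $e_0$ is chosen with probability tending to one, which gives both the GFT and the $\Pi_B$ limits.

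The main obstacle I expect is the justification that $\Pi_S(\GSOM)$ equals the expected maximum seller-side virtual surplus, rather than merely being bounded by it, given the atoms at $L$ and $0$ and the deterministic edge. If the identity is delicate, I will only need the bound $\Pi_S(\GSOM)\le\E[\GFT(\GSOM)]$ from \cref{lem:gft-profit} together with a matching lower bound. For the lower bound, the profit on $A_N$ is at least the payment of the $e_0$ buyer. Her threshold price is the infimum report $b$ at which $e_0$ is still chosen, namely $b$ with $b>\sum_iP_i^S$. On $A_N$ this threshold is $\sum_i P_i^S<Nd_L$, so a direct computation could undershoot $Nd_L$. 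I therefore expect the correct statement to rely on the expected-virtual-surplus formula, in which the deterministic buyer's virtual value is degenerate. I will resolve this point first, and otherwise fall back on whatever implementation convention \cref{prop:gsom-gbom} provides.
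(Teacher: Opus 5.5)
Your proposal is correct and follows essentially the paper's proof. Both use the law of large numbers with the one-Lipschitz maximum (bounded convergence) for $\FB_{L,N}/N$, the high-probability selection of $e_0$ by GSOM and GBOM for the GFT limits, and the one-sided virtual-surplus payment identity (\cref{prop:gsom-gbom} with Appendix~\ref{app:one-sided}) to get $\Pi_S(\GSOM_{L,N})=\E[\max\{\sum_i P_i^S,Nd_L\}]$ and its buyer-side analogue.

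Your closing worry is unfounded, and the claim that the $e_0$ buyer's threshold on $A_N$ is $\sum_i P_i^S$ is false under the paper's conventions. For reports $b<Nd_L$, below her point-mass support, $\overline\phi(b)=\sup\varnothing=-\infty$, so $e_0$ is never selected at such reports and her threshold payment is exactly $Nd_L$, which is consistent with the identity you invoke.
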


\begin{proof}
Because $G_i$, $P_i^S$, and $P_i^B$ are bounded, the law of large numbers gives almost-sure and $L^1$ convergence:
\[
  \frac1N\sum_{i=1}^NG_i\to g_L,
  \qquad
  \frac1N\sum_{i=1}^NP_i^S\to p_L^S,
  \qquad
  \frac1N\sum_{i=1}^NP_i^B\to p_L^B.
\]

The first-best rule compares the best subset of random edges with the deterministic edge.
Hence
\[
  \frac{\FB_{L,N}}N
  =\E\left[\max\left\{\frac1N\sum_{i=1}^NG_i,d_L\right\}\right].
\]
Since $g_L>d_L$ and $x\mapsto\max\{x,d_L\}$ is one-Lipschitz, $L^1$ convergence gives the claimed limit for $\FB_{L,N}/N$.

For GSOM, let
\[
  \mathcal E_N^S
  :=\left\{\frac1N\sum_{i=1}^NP_i^S\ge d_L\right\}.
\]
Since $p_L^S<d_L$, we have $\Prb(\mathcal E_N^S)\to0$.
Outside $\mathcal E_N^S$, the deterministic edge is the unique seller-side virtual-surplus maximizer, so GSOM selects it and obtains GFT $Nd_L$.
On $\mathcal E_N^S$, every random edge selected by GSOM contributes nonnegative GFT because $\phi(v)\le v$, and the realized GFT divided by $N$ lies in $[0,L]$.
Therefore
\[
  \left|\frac{\E[\GFT(\GSOM_{L,N})]}N-d_L\right|
  \le L\Prb(\mathcal E_N^S)\longrightarrow0.
\]
The GBOM limit is identical, using $\psi(c)\ge c$ and $p_L^B<d_L$.

Finally, we evaluate the one-sided profits.
The distributions are regular, and fixed tie-breaking makes the virtual-surplus maximizing allocation constant on each quantile interval corresponding to an atom.
The one-sided payment identities in Appendix~\ref{app:one-sided} therefore give
\begin{align*}
  \Pi_S(\GSOM_{L,N})
  &=\E\left[\max\left\{\sum_{i=1}^NP_i^S,Nd_L\right\}\right],\\
  \Pi_B(\GBOM_{L,N})
  &=\E\left[\max\left\{\sum_{i=1}^NP_i^B,Nd_L\right\}\right].
\end{align*}
The claimed one-sided profit limits follow again from $L^1$ convergence and the one-Lipschitz property of the maximum.
\end{proof}

Let $\mathcal M_{L,N}$ denote GRO instantiated in this market, i.e., the equal mixture of $\GSOM_{L,N}$ and $\GBOM_{L,N}$.
For fixed $L$, \cref{prop:asymptotic} gives
\begin{equation*}
  \lim_{N\to\infty}
  \frac{\FB_{L,N}}{\E[\GFT(\mathcal M_{L,N})]}
  =\frac{g_L}{d_L}
  =\frac{eL}{(1+1/L)(L+e-2)}.
\end{equation*}
The same limit holds if the denominator is $\frac12(\Pi_S(\GSOM_{L,N})+\Pi_B(\GBOM_{L,N}))$.
Letting $L\to\infty$ in this expression gives
\begin{equation*}
  \lim_{L\to\infty}\lim_{N\to\infty}
  \frac{\FB_{L,N}}{\E[\GFT(\mathcal M_{L,N})]}=e,
\end{equation*}
and the same limit for the profit ratio.

It remains only to normalize the supports.
For each finite pair $(L,N)$, divide all values and costs by
\[
  \Lambda_{L,N}:=\max\{L,Nd_L\}.
\]
Positive scaling multiplies GFT, payments, virtual values, and virtual costs by the same factor.
It preserves allocations, regularity, and all approximation ratios.
The deterministic buyer and seller distributions are regular because their revenue and cost curves are linear in quantile space.
After scaling, every type lies in $[0,1]$ and feasibility is still described by the same single knapsack constraint.
Choosing $L$ and then $N$ sufficiently large proves \cref{thm:tightness}.

\section{Conclusion}\label{sec:discussion}

We determine the exact worst-case approximation guarantee of GRO in downward-closed matching markets. GRO always obtains at least a \(1/e\) fraction of first-best GFT, and this guarantee is tight even with regular type distributions supported on \([0,1]\), pairwise disjoint trading edges, and a single knapsack constraint. As a consequence, the random-offerer mechanism in bilateral trade also obtains a \(1/e\) fraction of first-best GFT.

Several questions remain open. Our tight construction relies on a global knapsack alternative, and therefore does not settle the exact guarantee of GRO for ordinary bipartite matching without additional feasibility constraints. In bilateral trade, the exact worst-case guarantee of the random-offerer mechanism is also still unknown. Our analysis relies on the down-set structure of first-best edge-selection regions and the stable-partner property of single-parameter matching markets. It would be interesting to understand whether similar local arguments can be developed beyond this setting, particularly in multi-parameter two-sided markets.

\section*{Acknowledgements}
Generative AI tools, especially GPT 5.6 and GPT 6, assisted with many parts of the mathematical development and exposition.
These tools were also used for writing the early-stage manuscript of this paper.
The authors carefully verified the mathematical arguments and revised the exposition for clarity.
The authors take full responsibility for the final text and results.

\bibliographystyle{alpha}
\bibliography{references}

\clearpage
\appendix

\section{Quantiles, endpoints, and atoms}\label{app:quantile}

This appendix fixes the quantile conventions at atoms and endpoints and specifies the regularity convention used in \cref{sec:tight}.

For a nonnegative random variable $v$, choose a nonincreasing left-continuous upper-quantile representation $Q_B:(0,1)\to[0,\infty)$ such that $Q_B(\rho)$ has the law of $v$ for $\rho\sim\operatorname{Unif}(0,1)$.
For a nonnegative cost $c$, choose a nondecreasing left-continuous lower-quantile representation $Q_S$.
Generalized inverses provide such versions after changing only countably many points.

For $x\in(0,1)$, a top-$x$ threshold rule allocates to exactly the top $x$ probability mass.
Put $p=Q_B(x)$.
Then
\begin{equation*}
  \Prb[v>p]\le x\le\Prb[v\ge p].
\end{equation*}
If $\Prb[v=p]>0$, allocate to type $p$ with probability
\begin{equation*}
  \alpha_x
  :=\frac{x-\Prb[v>p]}{\Prb[v=p]}.
\end{equation*}
If the boundary has zero probability, no randomization is needed.
Charge $p$ conditional on allocation.
This rule is DSIC and IR, its allocation probability is $x$, and its expected payment is $xQ_B(x)=R(x)$.
At $x=0$, it makes no allocation and charges zero.
For bounded distributions, $x=1$ is defined using the finite left limit $Q_B(1):=\lim_{\rho\uparrow1}Q_B(\rho)$; every use in the main proof can equivalently be obtained as a limit from $x<1$.

If $x=x(z)$ is a measurable function of auxiliary reports $z$, define $p(z)=Q_B(x(z))$ on $\{0<x(z)<1\}$ and set $p(z)=0$ on $\{x(z)=0\}$.
The boundary-randomization maps are measurable, so allocation and payment are jointly measurable in $(v,z)$.
A bottom-$y$ procurement threshold is symmetric and has expected payment $yQ_S(y)=K(y)$.

\begin{lemma}[Section endpoints]\label{lem:endpoint-sup}
For $t\in(0,1)$ and $a,b\ge0$,
\begin{align*}
  \sup_{0\le x<t}x(Q_B(x)-a)
    &=\sup_{0\le x\le t}x(Q_B(x)-a),\\
  \sup_{0\le y<t}y(b-Q_S(y))
    &=\sup_{0\le y\le t}y(b-Q_S(y)).
\end{align*}
The same identities hold for $t=1$ when $Q_B,Q_S$ are bounded and extended to $1$ by their left limits.
\end{lemma}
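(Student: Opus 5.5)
The inequality ``$\le$'' in each identity is immediate, because the right-hand supremum runs over a larger set. The whole task is therefore to show that including the endpoint $x=t$ (respectively $y=t$) cannot raise the supremum. I will prove the buyer identity in detail; the seller identity follows by the symmetric argument with $Q_S$ nondecreasing and left-continuous in place of $Q_B$ nonincreasing and left-continuous.

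The key tool is left-continuity of the chosen quantile representation from Appendix~\ref{app:quantile}. Fix $t\in(0,1)$ and $a\ge0$. Take a sequence $x_n\uparrow t$ with $x_n<t$. Since $Q_B$ is left-continuous at $t$, we have $Q_B(x_n)\to Q_B(t)$, and hence
\[
  x_n\bigl(Q_B(x_n)-a\bigr)\to t\bigl(Q_B(t)-a\bigr).
\]
Each term on the left is at most the left-hand supremum. So the value at the endpoint is at most that supremum, and the two sides coincide.

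The argument never uses the sign of $Q_B(t)-a$, so no case split is needed. Monotonicity alone would only give $Q_B(x_n)\ge Q_B(t)$, which already yields $x_n(Q_B(x_n)-a)\ge x_n(Q_B(t)-a)\to t(Q_B(t)-a)$. This gives a second, even more elementary route that avoids left-continuity entirely on the buyer side. On the seller side the analogous monotone bound is $Q_S(y_n)\le Q_S(t)$, which gives $y_n(b-Q_S(y_n))\ge y_n(b-Q_S(t))\to t(b-Q_S(t))$. So monotonicity suffices for both identities, and I will present this monotone version as the main argument.

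For $t=1$, the endpoint values are defined through the left limits $Q_B(1)=\lim_{\rho\uparrow1}Q_B(\rho)$ and $Q_S(1)=\lim_{\sigma\uparrow1}Q_S(\sigma)$. These limits are finite by the boundedness assumption, and the same monotone comparison along $x_n\uparrow1$ goes through verbatim. The only point needing care is confirming that these endpoint values are finite, so that the limits of the products make sense. I expect no real obstacle here.
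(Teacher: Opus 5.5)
Your proof is correct and is essentially the paper's argument: approach the endpoint along $x_n\uparrow t$ and use monotonicity ($Q_B(x_n)\ge Q_B(t)$, respectively $Q_S(y_n)\le Q_S(t)$) to bound the endpoint value by the $\liminf$ of admissible values, which is at most the left-hand supremum. Your left-continuity variant also works, but as you note it is unnecessary.
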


\begin{proof}
Take $x_n\uparrow t$.
Monotonicity gives $Q_B(x_n)\ge Q_B(t)$, so
\begin{equation*}
  \liminf_{n\to\infty}x_n(Q_B(x_n)-a)
  \ge t(Q_B(t)-a).
\end{equation*}
Thus adding the endpoint cannot increase the first supremum.
The seller identity follows from $Q_S(y_n)\le Q_S(t)$ for $y_n\uparrow t$.
\end{proof}

For an atom at buyer value $a$, the corresponding quantile interval has endpoints $\Prb[v>a]$ and $\Prb[v\ge a]$; for an atom at seller cost $a$, its endpoints are $\Prb[c<a]$ and $\Prb[c\le a]$.
Let $R^{\mathrm{lin}}$ and $K^{\mathrm{lin}}$ be obtained from $R$ and $K$ by linear interpolation between the endpoint values on each such interval, leaving the curves unchanged elsewhere.
The regularity condition in \cref{sec:tight} is concavity of $R^{\mathrm{lin}}$ for buyers and convexity of $K^{\mathrm{lin}}$ for sellers.
This interpolation is needed because an atom may follow a gap in the type support, in which case the original curve need not be continuous at the beginning of the atom interval.
Degenerate distributions are regular because their curves are linear.
Applying this interpolation to the individual revenue and cost curves does not change the functions $\overline R_i$ and $\underline K_j$ from \cref{subsec:one-sided}.

\section{Measurability of the posted-price construction}\label{app:measurable-posted}

We give the measurable-selection details used in \cref{prop:realize-benchmarks}.
Fix an enumeration $\{d_k\}_{k\ge1}$ of $\mathbb Q\cap(0,1)$.

\begin{lemma}[Measurable section lengths]\label{lem:measurable-sections}
Fix an edge $e=(i,j)$.
Holding every report except buyer $i$'s fixed, the length $\ell_{B,e}$ of the buyer-quantile selection interval is a measurable function of the fixed reports.
The analogous seller-quantile length $\ell_{S,e}$ is measurable.
\end{lemma}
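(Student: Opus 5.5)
The plan is to write $\ell_{B,e}$ as a countable supremum of measurable functions. Fix the edge $e=(i,j)$ and let $\vct t_{-i}$ be the fixed reports of all agents other than buyer $i$. For each rational $d_k$, consider the indicator
\[
  f_k(\vct t_{-i}):=\1\bigl\{e\in M^*(Q_i(d_k),\vct t_{-i})\bigr\}.
\]
This indicator is measurable in $\vct t_{-i}$. The family $\cF$ is finite, so the event that a given matching is the tie-broken maximizer is a finite Boolean combination of inequalities and equalities between the finitely many sums $\GFT(M;\cdot)$. These sums are continuous, hence Borel, in the reports once $v_i=Q_i(d_k)$ is fixed. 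The event $e\in M^*$ is then a finite union of such events.

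Next, show that
\[
  \ell_{B,e}(\vct t_{-i})=\sup_{k\ge1} d_k f_k(\vct t_{-i}),
\]
with the supremum of zeros equal to zero. By the down-set property in \cref{prop:edge-region}, applied to the buyer coordinate alone, the set $\{\rho\in(0,1):e\in M^*(Q_i(\rho),\vct t_{-i})\}$ is an initial interval $(0,\ell)$ or $(0,\ell]$.

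If $\ell>0$, every rational $d_k<\ell$ lies in the set, so the right side is at least $\sup\{d_k<\ell\}=\ell$. Conversely, any $d_k$ with $f_k=1$ lies in the set, so $d_k\le\ell$. If $\ell=0$ the set is empty, all $f_k$ vanish, and both sides equal zero.

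A countable supremum of measurable functions is measurable, which proves the claim for $\ell_{B,e}$. The seller statement follows by the same argument with the roles exchanged: fix every report except seller $j$'s, use $Q_j(d_k)$, and invoke the seller half of the down-set property.

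The only delicate point is the use of rational points. It needs the set to be an interval, so that the rationals are dense in it; the down-set property supplies exactly this even at atoms, where $Q_i$ is constant on a subinterval.
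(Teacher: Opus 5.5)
Your proof is correct, but it reaches measurability by a different device than the paper. The paper writes
\[
  \ell_{B,e}(\vct t_{-i})=\int_0^1\1\{e\in M^*(Q_i(\rho),\vct t_{-i})\}\dd\rho .
\]
It notes that this indicator is jointly measurable in $(\rho,\vct t_{-i})$, because of finitely many objective comparisons and fixed tie-breaking. It then concludes by Tonelli that the integral is measurable.

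You instead evaluate the indicator only at the countably many rational ranks and take $\sup_k d_k f_k$. This needs measurability in $\vct t_{-i}$ only for each fixed rank, so it avoids both joint measurability and Fubini. It is also the same rational-enumeration device the paper uses in the next lemma for the approximate threshold.

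Both routes rest on the initial-interval structure from \cref{prop:edge-region}. You use it explicitly: the rationals are dense in the selection set, so the rational supremum equals its supremum. The paper uses it tacitly: it identifies the Lebesgue measure of the selection set with its supremum, which is how $\ell_{B,e}$ is defined. The paper's version is a one-line formula, while yours makes the dependence on the interval structure fully explicit.
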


\begin{proof}
If $\vct t_{-i}$ denotes the fixed reports, then
\begin{equation*}
  \ell_{B,e}(\vct t_{-i})
  =\int_0^1\1\{e\in M^*(Q_i(\rho),\vct t_{-i})\}\dd \rho.
\end{equation*}
The indicator is jointly measurable in $(\rho,\vct t_{-i})$ because $M^*$ is determined by finitely many measurable objective comparisons and fixed tie-breaking.
Its integral is measurable.
The seller statement is symmetric.
\end{proof}

\begin{lemma}[Measurable approximate threshold]\label{lem:measurable-threshold}
Let $t(z)\in[0,1]$ and $a(z)\ge0$ be measurable, and let $Q_B$ be nonincreasing.
Define
\begin{equation*}
  F(z,x):=
  \begin{cases}
    0,&x=0,\\
    x(Q_B(x)-a(z)),&0<x<1.
  \end{cases}
\end{equation*}
For every $\eta>0$, there exists a measurable $x_\eta(z)\in[0,1)$ such that $x_\eta(z)=0$ when $t(z)=0$, and, when $t(z)>0$,
\begin{equation*}
  x_\eta(z)<t(z),
  \qquad
  F(z,x_\eta(z))
  \ge\sup_{0\le x<t(z)}F(z,x)-\eta.
\end{equation*}
The symmetric statement holds for the seller objective that is zero at $y=0$ and equals $y(b(z)-Q_S(y))$ for $y>0$, where $Q_S$ is nondecreasing.
\end{lemma}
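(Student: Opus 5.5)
The plan is to select from a countable candidate set of thresholds, the rationals $\{d_k\}$ below $t(z)$, and to pick the first index that is $\eta$-optimal. This rule is measurable because it is defined by countably many measurable comparisons.

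\textbf{Step 1: rationals suffice.} Fix $z$ with $t(z)>0$. I would first show that
\[
  s(z):=\sup_{0\le x<t(z)}F(z,x)=\max\Bigl\{0,\ \sup_{k:\,d_k<t(z)}F(z,d_k)\Bigr\}.
\]
The value $x=0$ contributes $0$. For $0<x<t(z)$, take rationals $d_{k_n}\uparrow x$ with $d_{k_n}<x$. Monotonicity of $Q_B$ gives $Q_B(d_{k_n})\ge Q_B(x)$, so
\[
  \liminf_n F(z,d_{k_n})\ge \liminf_n d_{k_n}\bigl(Q_B(x)-a(z)\bigr)=F(z,x).
\]
This is the argument of \cref{lem:endpoint-sup}, applied at interior points rather than at the endpoint. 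Consequently $s(z)$ is a countable supremum of the measurable functions $z\mapsto F(z,d_k)\1\{d_k<t(z)\}$ (with value $-\infty$ when $d_k\ge t(z)$), together with $0$. Hence $s$ is measurable. Each $z\mapsto F(z,d_k)=d_k(Q_B(d_k)-a(z))$ is measurable since $a$ is. The supremum $s(z)$ is finite because $xQ_B(x)\le\int_0^1Q_B<\infty$ by integrability; if only monotonicity is assumed, I would note that $xQ_B(x)\le\int_0^xQ_B$ is finite whenever $Q_B$ is finite-valued on $(0,1)$ and $x<1$.

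\textbf{Step 2: selection.} On $\{t(z)>0\}$, if $s(z)\le\eta$, set $x_\eta(z)=0$; this is admissible since $0<t(z)$ and $F(z,0)=0\ge s(z)-\eta$. Otherwise $s(z)>\eta>0$, so Step 1 yields some $k$ with $d_k<t(z)$ and $F(z,d_k)>s(z)-\eta$. Let $k^*(z)$ be the smallest such index and set $x_\eta(z)=d_{k^*(z)}$. On $\{t(z)=0\}$, set $x_\eta(z)=0$.

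For measurability, observe that
\[
  \{k^*=k\}=A_k\cap\bigcap_{m<k}A_m^c,\qquad A_k=\{d_k<t(z),\ F(z,d_k)>s(z)-\eta\}.
\]
Each $A_k$ is measurable, so $x_\eta=\sum_k d_k\1\{k^*=k\}$ on the relevant set is a countable combination of measurable pieces. By construction $x_\eta(z)\in[0,1)$ and $x_\eta(z)<t(z)$ whenever $t(z)>0$.

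\textbf{Step 3: seller version.} The seller objective is symmetric: $y\mapsto y(b(z)-Q_S(y))$ with $Q_S$ nondecreasing. Approximating from the left gives $Q_S(d_{k_n})\le Q_S(y)$, so the same lower-semicontinuity-from-the-left argument applies, and Steps 1–2 carry over verbatim.

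The main obstacle is Step 1, the reduction to rationals. The key point is that the approximation must come from the left, so that monotonicity pushes in the favorable direction. Left-continuity is not actually needed; the inequality suffices.
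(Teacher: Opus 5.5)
Your argument is correct and is essentially the paper's proof: restrict to rational thresholds below $t(z)$, justify this by approximating from the left and using monotonicity of $Q_B$, and take the least $\eta$-optimal index. The paper folds the no-trade option in as an index $k=0$ with $f_0\equiv0$, rather than splitting on whether $s(z)\le\eta$.

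One small correction: finiteness of the supremum genuinely needs integrability of $Q_B$, which holds in the paper's setting, and your fallback for merely monotone $Q_B$ is false. For $Q_B(u)=u^{-2}$ one has $\int_0^xQ_B=\infty$ and $\sup_{0\le x<t(z)}F(z,x)=\infty$ for every $t(z)>0$, so no finite threshold can be $\eta$-optimal.
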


\begin{proof}
Define $f_0(z):=0$.
For $k\ge1$, set
\begin{equation*}
  f_k(z):=
  \begin{cases}
    d_k(Q_B(d_k)-a(z)),&d_k<t(z),\\
    -\infty,&d_k\ge t(z).
  \end{cases}
\end{equation*}
Then $m(z):=\sup_{k\ge0}f_k(z)$ is measurable.
When $t(z)=0$, choose $x_\eta(z)=0$.
Otherwise choose the smallest $k\ge0$ with $f_k(z)\ge m(z)-\eta$, and set $x_\eta(z)=0$ if $k=0$ and $x_\eta(z)=d_k$ otherwise.
This choice is measurable and lies strictly below $t(z)$.

Rational thresholds have the same supremum: for any $x\in(0,t(z))$, choose rationals $d_{k_n}\uparrow x$.
Since $Q_B$ is nonincreasing,
\begin{equation*}
  \liminf_{n\to\infty}
  d_{k_n}(Q_B(d_{k_n})-a(z))
  \ge x(Q_B(x)-a(z)).
\end{equation*}
Including the zero option proves the claimed approximation bound.
The seller statement is symmetric.
\end{proof}

\begin{lemma}[Atoms are not split]\label{lem:atoms-not-split}
Fix every report except one endpoint of an edge $e$.
If two quantile ranks represent the same reported type, either both ranks select $e$ in the first-best matching or neither does.
Consequently, a nonempty initial selection interval cannot end in the interior of an atom interval.
\end{lemma}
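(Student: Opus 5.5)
The plan is to reduce the statement to a comparison in type space and then apply the invariance argument from the proof of \cref{prop:edge-region}. Fix an edge $e=(i,j)$ and all reports except, say, buyer $i$'s. Two ranks $\rho<\rho'$ in the same atom interval satisfy $Q_i(\rho)=Q_i(\rho')$. The first-best matching $M^*(Q_i(\rho),\vct t_{-i})$ depends on $\rho$ only through the reported type $Q_i(\rho)$. Hence the two profiles $(Q_i(\rho),\vct t_{-i})$ and $(Q_i(\rho'),\vct t_{-i})$ coincide. The tie-broken maximizer is uniquely determined because we fixed a strict order on $\cF$, so it is the same matching in both cases. Therefore either $e$ belongs to it at both ranks or at neither. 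The seller case is identical, with $Q_j(\sigma)$ in place of $Q_i(\rho)$.

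For the consequence, let the selection interval have length $\ell=\ell_{B,e}>0$. Suppose $\ell$ lies strictly inside an atom interval $(a,b)$, where $a$ and $b$ are the atom's endpoints from \cref{app:quantile}. By the definition of $\ell$ as a supremum, there are ranks in $(a,\ell]$ that select $e$. By the down-set property in \cref{prop:edge-region}(i), every rank below $\ell$ selects $e$. So some rank in $(a,\ell)$ selects $e$. Any rank in $(\ell,b)$ represents the same type and does not select $e$, because it exceeds the supremum. This contradicts the first part, so $\ell$ cannot lie in the open interior of an atom interval.

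One subtlety is that the quantile function is left-continuous, so it is constant on $(a,b]$. The argument therefore also shows that the selection set contains either all of $(a,b]$ or none of it. This is the form used in \cref{prop:realize-benchmarks}. If a boundary-atom buyer is accepted at threshold $x_e<\ell$ with $x_e$ inside the atom interval, then the atom interval lies entirely inside the selection set. Hence the realized type selects $e$ regardless of the boundary randomization.

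The main obstacle is only bookkeeping. One has to check that the endpoint convention, which allows $\ell$ to be included or excluded, is compatible with the left-continuous representation. It also has to be checked that "same reported type" is exactly what determines the deterministic tie-broken $M^*$. No real analytic difficulty is expected.
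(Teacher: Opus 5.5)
Your proof is correct and follows the same route as the paper. The tie-broken first-best matching is a deterministic function of the reported profile, so equal types give the same matching; and because the selection set is an initial interval, its endpoint cannot fall strictly inside a positive-length level set of the quantile function. Your supremum-based contradiction, and the observation that the selection set contains either all of $(a,b]$ or none of it, spell out in more detail the consequence the paper states in a single line.
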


\begin{proof}
The first-best matching is a deterministic function of the reported type profile and the fixed tie-breaking rule.
Equal endpoint types therefore induce the same matching.
Since the edge-selection set is an initial interval, its boundary cannot split a positive-length level set of the quantile function.
\end{proof}

Together with Appendix~\ref{app:quantile}, the lemmas justify the measurable posted-price construction in \cref{prop:realize-benchmarks}, including threshold atoms.

\section{One-sided optimality of GSOM and GBOM}\label{app:one-sided}

We recall the one-sided virtual-surplus characterization used in \cref{prop:gsom-gbom}.
The two-sided implementation properties invoked there are from~\cite{BrustleCaiWuZhao2017}.

A \emph{buyer ironing interval} is a connected component of $\{\rho\in(0,1):\overline R_i(\rho)>R_i(\rho)\}$.
A \emph{seller ironing interval} is a connected component of $\{\sigma\in(0,1):\underline K_j(\sigma)<K_j(\sigma)\}$.
The corresponding derivative is constant on each ironing interval and on the interior of each quantile interval corresponding to an atom.

\subsection{Seller-side auction}

For buyer $i$, let $\rho_i\sim\operatorname{Unif}(0,1)$ be her upper-tail rank, and use the revenue curve $R_i$, its concave majorant $\overline R_i$, and the nonincreasing left derivative $\Phi_i$ defined in \cref{subsec:one-sided}.
The type-space function $\overline\phi_i$ agrees almost surely with $\Phi_i$ under this quantile representation.

Fix the seller costs.
Write $\vct\rho=(\rho_k)_{k\in B}$ for the vector of independent buyer ranks, and express allocation rules in these coordinates.
For a BIC buyer allocation rule $q$, let
\begin{equation*}
  a_i(\rho):=\E_{\vct\rho_{-i}}[q_i(\rho,\vct\rho_{-i})]
\end{equation*}
be buyer $i$'s interim allocation probability.
BIC implies that $a_i$ is nonincreasing in the upper-tail rank.
The payment identity and Myerson's ironing inequality give
\begin{equation*}
  \E[p_i]
  \le\int_0^1\Phi_i(\rho)a_i(\rho)\dd \rho
  =\E[\Phi_i(\rho_i)q_i(\vct\rho)].
\end{equation*}
For an irregular distribution, equality is attained when the lowest buyer type's utility is normalized to zero and $a_i$ is constant on every interval where $\overline R_i>R_i$.
Summing the payment bounds over buyers and subtracting realized seller costs shows that every feasible BIC and interim-IR seller-side auction satisfies
\begin{equation*}
\begin{aligned}
  \E[\pi_S(M,\vct p)]
  &\le
  \E\left[
    \sum_{(i,j)\in M}\bigl(\Phi_i(\rho_i)-c_j\bigr)
  \right]\\
  &\le
  \E\left[
    \max_{M\in\cF}
    \sum_{(i,j)\in M}\bigl(\Phi_i(\rho_i)-c_j\bigr)
  \right].
\end{aligned}
\end{equation*}
The pointwise maximizer in the last expression is monotone in every buyer's rank.
On an interval where $\overline R_i>R_i$, the slope $\Phi_i$ is constant; with fixed tie-breaking, the selected matching is also constant on that interval.
The allocation is therefore compatible with ironing.
Normalized threshold payments make both inequalities above tight.
Hence the GSOM allocation is one-sided optimal.

\subsection{Buyer-side procurement auction}

For seller $j$, let $\sigma_j\sim\operatorname{Unif}(0,1)$ be her lower-tail rank, and use the cost curve $K_j$, its convex minorant $\underline K_j$, and the nondecreasing left derivative $\Psi_j$ defined in \cref{subsec:one-sided}.
The type-space function $\overline\psi_j$ agrees almost surely with $\Psi_j$ under this quantile representation.
Fix the buyer values.
Write $\vct\sigma=(\sigma_k)_{k\in S}$ for the vector of independent seller ranks, and express allocation rules in these coordinates.
For a BIC procurement allocation rule $q$, let
\begin{equation*}
  a_j(\sigma):=\E_{\vct\sigma_{-j}}[q_j(\sigma,\vct\sigma_{-j})]
\end{equation*}
be seller $j$'s interim allocation probability.
BIC implies that $a_j$ is nonincreasing in cost and hence nonincreasing in the lower-tail rank.
The procurement payment identity and convex ironing give
\begin{equation*}
  \E[r_j]
  \ge\int_0^1\Psi_j(\sigma)a_j(\sigma)\dd \sigma
  =\E[\Psi_j(\sigma_j)q_j(\vct\sigma)].
\end{equation*}
For irregular costs, equality is attained when the allocation is constant on every seller ironing interval and the utility at the upper endpoint of the cost support is normalized to zero, interpreted as a limiting utility when that endpoint is not attained or is infinite.
Consequently, every feasible BIC and interim-IR buyer-side procurement auction satisfies
\begin{align*}
  \E[\pi_B(M,\vct r)]
  &\le
  \E\left[
    \sum_{(i,j)\in M}\bigl(v_i-\Psi_j(\sigma_j)\bigr)
  \right]\\
  &\le
  \E\left[
    \max_{M\in\cF}
    \sum_{(i,j)\in M}\bigl(v_i-\Psi_j(\sigma_j)\bigr)
  \right].
\end{align*}
The pointwise maximizer is monotone and constant on seller ironing intervals.
Normalized procurement payments make both inequalities tight, proving the one-sided optimality of GBOM.

\end{document}